\title{INACCESSIBILITY-INSIDE THEOREM FOR POINT IN
  POLYGON\thanks{'eligible for the best student paper award' A
    copy of this work can be found at http://arxiv.org/abs/1010.0552
    in the open arXiv of Cornell University Library.}}  
\author{Shriprakash Sinha$^{1,2}$ and Luca Nanetti$^{1}$ \thanks{($1$) Neuroimaging Center,
    UMCG, Antonious Deusinglaan 2, 9713 AW Groningen, The
    Netherlands. ($2$) ICT, EEMCS, Pattern Recognition and Bioinformatics
    Group, Mekelweg 4, 2628 CD Delft, The
    Netherlands. Questions, comments, or
    corrections to this document may be directed to the above email
    addresses. Email - shriprakash.sinha@gmail.com, l.nanetti@med.umcg.nl.}}
\begin{document}
\maketitle

\begin{abstract}
The manuscript presents a theoretical proof in conglomeration with new definitions on \emph{Inaccessibility} and
\emph{Inside} for a point $\mathcal{S}$ related to a simple or self
intersecting polygon $\mathcal{P}$. The proposed analytical solution
depicts a novel way of solving the point in polygon problem by
employing the properties of epigraphs and hypographs, explicitly. Contrary to the
ambiguous solutions given by the cross over for the simple and self
intersecting polygons and the solution of a point being multi-ply
inside a self intersecting polygon given by the winding number rule,
the current solution gives unambiguous and singular result for both
kinds of polygons. Finally, the current theoretical solution proves to
be mathematically correct for simple and self intersecting polygons. 
\end{abstract}

\begin{keywords} Student, Inaccessibility, Inside, Point, Polygon,
  Epigraph, Hypograph.\end{keywords}

\begin{AMS} 65D18, 68U05\end{AMS}

\pagestyle{myheadings}
\thispagestyle{plain}
\markboth{INACCESSIBILITY-INSIDE THEOREM FOR POINT IN POLYGON}{Shriprakash Sinha and Luca Nanetti}

\section{Introduction}\label{sec:intro}

Given a polygon $\mathcal{P}$ or the vertices of the polygon, say
$(x_{1},y_{1})$, $(x_{2},y_{2})$ ... $(x_{n},y_{n})$, it is desired to know whether a sample point $\mathcal{S}$
$(x_{0},y_{0})$ lies within $\mathcal{P}$. \emph{The status of a point
  $\mathcal{S}$ related to a polygon $\mathcal{P}$ being termed as
  inside needs to be defined correctly}. The definition is very
important in order to retreive unambiguous results for not only simple
but intersecting polygons also in a $2D$ Cartesian plane. In this
manuscript, new definition of \emph{inaccessibility} and \emph{inside}
has been proposed in order to accurately specify the meaning behind
the inclusion of a point within or without a polygon.\par  
\textbf{Cross Over} (\cite{Nordbeck:1967}, \cite{Manber:1989},
\cite{Foley:1990}, \cite{Haines:1994}) states that if a semi infinite
line drawn from $\mathcal{S}$ cuts the $\mathcal{P}$ odd number of
times, then the point is inside the polygon. Three issues arise in this
case, i.e. $\bullet$ depending on the \textbf{orientation} of the line from the
query point, odd or even values can be obtained, if the line passes
through vertices. This gives rise to ambiguous results for the same
point with different rays at different orientation. A prevalent
solution is the shifiting of the ray infinitesimally, but then again
solution may change drastically depending on the \textbf{direction of the
shift}. Even though this may be a rare case with negligible chance of
occurance, the issue persists, leading to ambiguous results. $\bullet$
A second issue is that of repeating the cross over multiple times
until the point lies inside the polygon. This leads to \textbf{non determinism}
as it is not known how many times the rays need to be shot to get an
affirmative answer, if ever it is conducted. $\bullet$ In case of
intersecting polygons, areas exist which give a different solution
than the winding number rule concept. \par
\textbf{Winding Number Rule} (\cite{Haines:1994}, \cite{Huang:1997}, \cite{Zalik:2001})
states that the number of times one loops
around $\mathcal{S}$ while traversing $\mathcal{P}$ before reaching
the starting point on the polygon shows whether the point is inside
the polygon or not. So a number $\ell$ greater than one can mean
that the point is $\ell$ times inside the polygon. This is an issue
because if a point lies inside a polygon once, it lies forever. Thus
$\ell > 1$ depicts the idea of \textbf{redundancy}. \par
As will be explained later in detail, the current solution looks at
these problems afflicting the status of point related to polygon from
a different perspective. The manuscript defines the concepts of
\emph{inaccessibility} and \emph{inside} of a polygon $\mathcal{P}$
while relating to the query point $\mathcal{S}$. The proposed solution
is motivated from \cite{Chen:1987} but has a \emph{major difference}
in using $\mathcal{S}$ as a reference point to draw a line chain
through it, that cuts the polygon at different intersection
points. The following section (\ref{sec:novel}) shed light on novel
algorithm explained with the assumptions involved. Next a theoretical
analysis of the solution is given in \ref{sec:proof}. A detailed
comparison with the crossover and the winding number with the proposed
algorithm is made in sections \ref{sec:covseh} and
\ref{sec:wnrvseh}. Finally, the conclusion is reached in section
\ref{sec:conclusion}.  For detailed analysis of the experimental results
and time complexity of the algorithm please see the appendix below or visit
http://arxiv.org/abs/1010.0552. \par
%
\section{A Novel Algorithm}\label{sec:novel}
The novel algorithm in simple terms can be described as follows. Given
the sample point ($x_{0}, y_{0}$), a horizontal line $y = y_{0}$ is
drawn through $\mathcal{S}$ to cut the $\mathcal{P}$ at $q$ locations
$\{$($x_{1}^{int}, y_{0}$), ... , ($x_{q}^{int}, y_{0}$) $\}$, thus
breaking the polygon into $q$ chains. \par
\begin{definition}
A \emph{chain} is a series of connected edges of the polygon whose starting and
ending points lie on the horizontal straight line that passes through $\mathcal{S}$. Mathematically, a
chain is a function $f_{c}$, with a closed domain defined by the
starting and ending points on the horizontal line passing through the
point of test $\mathcal{S}$, and a range that is the graph of the
currently under investigation connected edges of the polygon
. \label{def:chain}
\end{definition} \par
Each chain is then checked for whether its two endpoints contain the
test point between them; if not, the chain is discarded. Discarded
chains are termed as \emph{invalid chains} and those kept for further
consideration are referred to as \emph{valid chains}. The remaining
chains are then tested for intersection with a vertical line $x =
x_{0}$ through $\mathcal{S}$. The intersections found are sorted by
height, and paired up. If the test point is not between a pair, it is
outside. This criterion of containment is checked via the definitions
of affine sets and affine combination as follows: \par
\begin{definition}
A set $\mathcal{T} \subseteq \mathcal{R}^{n}$ is an \emph{affine set}, if for any
two points $x_{i}, x_{j} \in \mathcal{T} (j > i)$ and $\theta \in [0,1]$,
$\theta x_{i} + (1-\theta) x_{j} \in \mathcal{T}$. \label{def:affineset}
\end{definition} \par
\begin{definition}
An \emph{affine combination} of $x_{i}, x_{j} \in \mathcal{R}$ are a set of points of
the the form $\theta_{i} x_{i} + \theta_{j} x_{j}$, where $\theta_{i}
+ \theta_{j} = 1$. \label{def:affinecomb}
\end{definition} \par
These definitions and notations and a few others, are adopted from
\cite{Boyd:2004}. It is assumed that the vertices of the polygon
$\mathcal{P}$ are arranged in order of traversal, starting from one of
the vertices. The traversal order can be in any one direction. Another
assumption is that the edges are traversed only once. This is useful
in avoiding multiple loops that may occur in cases of intersecting
polygons. \par
If $\mathcal{S}$ lies out of the bounding box of the polygon, it is
considered outside $\mathcal{P}$ and no further processing is
done. Lastly, if the sample point is one of the vertices of the
polygon, then it is considered to be in the polygon. This final point
is assumed as the proposed solution would reach the same conclusion at
the expense of computational time. In the theoretical proof, it will
be shown that the assumption for implementation issue is correct. \par
As the algorithm is explained the concepts of epigraph and
hypographs will also be used for providing an analytically complete
elucidation of the generated solution. The definition of these are as
follows: \par
\begin{definition}
The \emph{epigraph} of a function (chain)
$f_{c}:\mathcal{R}^{n}\rightarrow\mathcal{R}$ is a set of points that
lie on or above the graph under consideration, such that $epi(f_{c}) =
\{(x,t) : x \in \mathcal{R}^{n}, t \in \mathcal{R}, f_{c}(x) \leq t\}$
is a subset of $\mathcal{R}^{n+1}$. \label{def:epigraph}
\end{definition} \par
\begin{definition}
The \emph{hypograph} of a function (chain)
$f_{c}:\mathcal{R}^{n}\rightarrow\mathcal{R}$ is a set of points that
lie on or below the graph under consideration, such that $hypo(f_{c}) = \{(x,t) :
x \in \mathcal{R}^{n}, t \in \mathcal{R}, f_{c}(x) \geq t\}$ is a
subset of $\mathcal{R}^{n+1}$. \label{def:hypograph}
\end{definition} \par
Finally, to decide if the point lies inside or is inaccessible with
respect to a polygon under consideration, the definition of nearest
chains would be needed. This definition is as follows: \par
\begin{definition}
Chains $\mathcal{C}_{i}$ and $\mathcal{C}_{j}$ are \emph{nearest
  valid chains} if:
\begin{itemize}
\item[$\bullet$] the $epi(f_{\mathcal{C}_{i}}) \subset epi(f_{\mathcal{C}_{u}})$
  $\forall u \in {1,...,q}$ chains below $\mathcal{S}$ such that
  $(x_{0},y_{0}) \in epi(f_{\mathcal{C}_{u}})$.
\item[$\bullet$] the $hypo(f_{\mathcal{C}_{j}}) \subset
  hypo(f_{\mathcal{C}_{v}})$ $\forall v \in {1,...,q}$ chains above
  $\mathcal{S}$ such that $(x_{0},y_{0}) \in
  hypo(f_{\mathcal{C}_{v}})$.
\end{itemize}
\label{def:nearestchain}
\end{definition} \par
%
\section{Inaccessibility-Inside Theorem}\label{sec:proof}
Given the new solution, it becomes imperative to prove the correctness
of the solution. This follows due to the fact that definitions like
the cross over and the winding number rule exist that state the
meaning of inside from different perspectives, thus giving
contradictory results. New definitions of \emph{inside} and
\emph{inaccessibility} of a point $\mathcal{S}$ related to polygon
$\mathcal{P}$ are proposed and a relation between inaccessibility and
inside is proved. \emph{This proof shows that consistent results can be
obtained if the meaning of the inaccessibility and inside of a polygon
related to a point is are framed correctly in an abstract sense}. \par
It must be noted that the points that lie on vertices and edges are
special cases and the definitions of inside and inaccessibility get
slightly modified. But this does not mean that the meaning of
inaccessibility and inside get twisted or modified from an abstract
sense. Two cases are presented, one that is a point lying on a vertex
and the other for the general case where it lies either on the edge or
anywhere else. \par
\subsection{Point on Vertex of Polygon}\label{sec:case_1}
The definitions of inaccessibility and inside are proposed for the
case of a point lying on a vertex. The essence of abstract meaning of
the same gets carried over to points not on vertex also but the
definitions are slgihtly modified. \par
\begin{definition}
The \emph{inaccessibility} $Inacc_{\mathcal{P}}(\mathcal{S})$ of a
point $\mathcal{S}$ related to a polygon $\mathcal{P}$, is the number
of valid chains that need to be \emph{broken}.
\[Inacc_{\mathcal{P}}(\mathcal{S}) = \left\{ 
\begin{array}{l l}
  \mathcal{N}, & \mbox{$\mathcal{N} \neq 0$ valid chains to be \emph{broken}}\\
  0, & \mbox{otherwise}\\ \end{array} \right. \]
\label{def:inacc_case_1}
\end{definition} \par
\begin{definition}
The status of a point $\mathcal{S}$ related to a polygon
$\mathcal{P}$, that is $Inside_{\mathcal{P}}(\mathcal{S})$, is the
existance of a chain $\mathcal{C}$ such that $\mathcal{S} \in
epi(f_{\mathcal{C}})$ or $\mathcal{S} \in hypo(f_{\mathcal{C}})$.
\[Inside_{\mathcal{P}}(\mathcal{S}) = \left\{ 
\begin{array}{l l}
  1, & \mbox{if $\mathcal{S} \in epi(f_{\mathcal{C}})$ or
    $\mathcal{S} \in hypo(f_{\mathcal{C}})$}\\
  0, & \mbox{otherwise}\\ \end{array} \right. \]
\label{def:inside_case_1}
\end{definition} \par
Based on these two definitions, two throrems need to be proved
regarding the relationship of inaccessibility of a point as well as
the status of a point whether it is inside with respect to the
polygon. \par
\begin{theorem}
A point $\mathcal{S}$ related to polygon $\mathcal{P}$ is inside as
well as inaccessible when:
$Inside_{\mathcal{P}}(\mathcal{S}) \in \{1\}$ iff
$Inacc_{\mathcal{P}}(\mathcal{S}) \in \{\mathcal{N}\}$
\label{theo:case_1_a}
\end{theorem}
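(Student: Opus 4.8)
The plan is to prove the two directions of the biconditional separately, in each case working with the horizontal line $L : y = y_0$ through $\mathcal{S}$ and the decomposition of the edges of $\mathcal{P}$ into chains (Definition~\ref{def:chain}) that $L$ induces; since $\mathcal{S}$ is a vertex of $\mathcal{P}$, the point $\mathcal{S}$ is itself one of the breaking points, hence an endpoint shared by the two chains adjacent to it along $L$. First I would fix terminology: a chain $\mathcal{C}$ is \emph{valid} when its two endpoints on $L$ contain $x_0$ between them in the affine-combination sense of Definitions~\ref{def:affineset}--\ref{def:affinecomb}, and ``$\mathcal{C}$ must be \emph{broken}'' means that every path from $\mathcal{S}$ to a point outside the bounding box of $\mathcal{P}$ meets $\mathcal{C}$ unless $\mathcal{C}$ is severed, so that $Inacc_{\mathcal{P}}(\mathcal{S})$ counts precisely these obstructing valid chains.

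For the forward implication, suppose $Inside_{\mathcal{P}}(\mathcal{S}) = 1$, so by Definition~\ref{def:inside_case_1} there is a chain $\mathcal{C}$ with $\mathcal{S}\in epi(f_{\mathcal{C}})$ or $\mathcal{S}\in hypo(f_{\mathcal{C}})$; by the up--down symmetry of Definitions~\ref{def:epigraph} and~\ref{def:hypograph} it suffices to treat $\mathcal{S}\in epi(f_{\mathcal{C}})$. Then $\mathcal{C}$ meets the vertical line $x = x_0$ at a point with ordinate $\le y_0$ while its two endpoints lie on $L$; because $\mathcal{P}$ is a closed curve these endpoints bracket $x_0$ (or one of them is $\mathcal{S}$ itself), so $\mathcal{C}$ is a valid chain, and it separates $\mathcal{S}$ from the downward vertical ray at $x = x_0$. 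I would then conclude that any exit path for $\mathcal{S}$ must cross $\mathcal{C}$, so $\mathcal{C}$ is among the valid chains that must be broken and $Inacc_{\mathcal{P}}(\mathcal{S}) = \mathcal{N}$ with $\mathcal{N}\neq 0$.

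For the backward implication, suppose $Inacc_{\mathcal{P}}(\mathcal{S}) = \mathcal{N}\neq 0$; then some valid chain $\mathcal{C}$ must be broken. I would show that ``must be broken'' forces $\mathcal{C}$ to cross the vertical line $x = x_0$, and that such a crossing lies either strictly below $\mathcal{S}$, in which case $\mathcal{S}\in epi(f_{\mathcal{C}})$, or strictly above it, in which case $\mathcal{S}\in hypo(f_{\mathcal{C}})$ (a crossing exactly at $\mathcal{S}$ is the vertex-endpoint situation, and then $f_{\mathcal{C}}(x_0) = y_0$ puts $\mathcal{S}$ in both sets). Hence $Inside_{\mathcal{P}}(\mathcal{S}) = 1$, which closes the equivalence; as a by-product I would note that for a vertex both sides in fact always hold, matching the behaviour assumed for the implementation in Section~\ref{sec:novel}.

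The step I expect to be the main obstacle is making ``a valid chain that must be broken $\iff$ $\mathcal{S}$ lies in its epigraph or hypograph'' fully rigorous without tacitly re-importing the crossover rule that this work is meant to replace: for a self-intersecting polygon a single chain can wind and meet the vertical $x = x_0$ several times, so I must track the parity of those crossings and argue that an odd net count on one side is exactly what makes the chain obstruct $\mathcal{S}$ while an even count does not; moreover the degenerate vertex configuration in which both edges incident to $\mathcal{S}$ lie on the same side of $L$ collapses a chain's domain to the single abscissa $x_0$, which requires a separate careful check of validity and of epigraph/hypograph membership.
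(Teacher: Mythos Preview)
Your proposal is correct in spirit but takes a substantially heavier route than the paper. In the paper's argument, ``valid chain'' and ``chain whose epigraph or hypograph contains $\mathcal{S}$'' are simply identified (the proof opens with ``It is known that a chain is valid when either its epigraph or hypograph contains $\mathcal{S}$''), and ``broken'' is never given an independent topological meaning; it is just the word attached to whatever one does to a valid chain in Definition~\ref{def:inacc_case_1}. With that identification in hand, both $Inacc_{\mathcal{P}}(\mathcal{S})=\mathcal{N}\neq 0$ and $Inside_{\mathcal{P}}(\mathcal{S})=1$ unwind to the same statement, namely ``there exists at least one valid chain,'' so the biconditional is essentially definitional. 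The paper's part~(b) just observes that one such chain gives $\mathcal{N}\ge 1$, and if $\mathcal{S}$ happens to be a self-intersection vertex then several chains may be valid, yielding a larger $\mathcal{N}$.

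What you are doing instead is assigning ``must be broken'' a genuine geometric content (every escape path from $\mathcal{S}$ meets the chain) and then trying to \emph{prove} that this obstruction property is equivalent to $\mathcal{S}\in epi(f_{\mathcal{C}})\cup hypo(f_{\mathcal{C}})$. That is a more honest programme, and your worry in the final paragraph---tracking crossing parity on the vertical $x=x_0$ for a self-intersecting chain, and handling the degenerate case where both edges at the vertex lie on the same side of $L$---is precisely the content the paper sidesteps by working purely at the level of definitions. Your approach would, if completed, supply the topological justification the paper leaves implicit; the paper's approach buys brevity at the cost of that justification. If you want to match the paper, you can drop the path-obstruction interpretation entirely and argue directly from the stated equivalence between validity and epi/hypo containment.
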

\begin{proof}
(a) If $Inacc_{\mathcal{P}}(\mathcal{S}) \in \{\mathcal{N}\}$ then
$Inside_{\mathcal{P}}(\mathcal{S}) \in \{1\}$ \\\par
Given that $Inacc_{\mathcal{P}}(\mathcal{S}) = \mathcal{N}$, there
exists $\mathcal{N}$ valid chains that need to be \emph{broken} according to
definition \ref{def:inacc_case_1}. It is known that a chain is valid
when either its epigraph or hypograph contains $\mathcal{S}$. This
existance of $\mathcal{N}$ valid chains imply that $\mathcal{S} \in
\{epi(f_{\mathcal{C}_{k}}), hypo(f_{\mathcal{C}_{k}})\}$ $\forall k
\in \{1,...,\mathcal{N}\}$. But this is the definition of stauts of
$\mathcal{S}$ related to $\mathcal{P}$,
i.e. $Inside_{\mathcal{P}}(\mathcal{S}) = 1$ or
$Inside_{\mathcal{P}}(\mathcal{S}) \in \{1\}$. \\\par
(b) If $Inside_{\mathcal{P}}(\mathcal{S}) \in \{1\}$ then
$Inacc_{\mathcal{P}}(\mathcal{S}) \in \{\mathcal{N}\}$ \\\par
Given $Inside_{\mathcal{P}}(\mathcal{S}) = 1$ implies that
$\mathcal{S} \in \{epi(f_{\mathcal{C}}), hypo(f_{\mathcal{C}})\}$ for a
chain $\mathcal{C}$ in $f$. Thus chain $\mathcal{C}$ is a valid chain,
as it contains the point $\mathcal{S}$. In order for $\mathcal{S}$ to
be inaccessible, there must exist atleast $1$ vaild chain in
$\mathcal{P}$ that needs to be \emph{broken}. Since $\mathcal{C}$ is one such
chain and the only chain that contains $\mathcal{S}$, the
inaccessibility order of $\mathcal{S}$ related to $\mathcal{P}$ in
$Inacc_{\mathcal{P}}(\mathcal{S}) = 1$ or
$Inacc_{\mathcal{P}}(\mathcal{S}) \in \{1\}$. \par
If $\mathcal{S}$ is a vertex such that it is an intersection point of
two or more lines of a polygon, then all chains that have their
epigraph or hypograph contain $\mathcal{S}$, are valid. Since it
requires $\mathcal{N}$ (if $\mathcal{N}$ is the number of valid
chains) chains to be \emph{broken}.
\end{proof}
Cases need to be shown pictorially to get a feel of what the theorem
is suggesting about. Figure \ref{fig:case_1_a} shows three different
polygons with $\mathcal{S}$ as the point under consideration. The
polygon in figure \ref{fig:case_1_a}.(A) has four chains that contain
$\mathcal{S}$ namely (a) $\mathcal{S}$TU (b) UV$\mathcal{S}$ (c)
  $\mathcal{S}$WX and (d) XY$\mathcal{S}$, which are valid. Thus by
  theorem \ref{theo:case_1_a}, $Inside_{\mathcal{P}}(\mathcal{S}) = 1$
  and $Inacc_{\mathcal{P}}(\mathcal{S}) = 4$. Thus $\mathcal{S}$ lies
  inside the polygon. Similarly, for figure \ref{fig:case_1_a}.(B)
  there is one chain $\mathcal{S}$TU$\mathcal{S}$ which is valid as it
  contains the point $\mathcal{S}$. Thus
  $Inside_{\mathcal{P}}(\mathcal{S}) = 1$ and
  $Inacc_{\mathcal{P}}(\mathcal{S}) = 4$. For the case of figure
  \ref{fig:case_1_a}.(C) there exists two chains that contain
  $\mathcal{S}$ i.e. (a) $\mathcal{S}$TU$\mathcal{S}$ and
  $\mathcal{S}$VW$\mathcal{S}$ which are valid. So
  $Inside_{\mathcal{P}}(\mathcal{S}) = 1$ and
  $Inacc_{\mathcal{P}}(\mathcal{S}) = 2$. \par
Note that since this holds true always when $\mathcal{S}$ lies on the
vertex of a polygon, it is obvious and correct to assume that the
point is in the polygon by first checking if $\mathcal{S}$ is any one
of the vertices in the polygon. This helps to avoid the
implementation hurdle of checking the theorem. But again it is
stressed that first the point needs to be checked against vertices of
the polygon, in order to know if they belong to $\mathcal{P}$. \par
\begin{theorem}
A point $\mathcal{S}$ related to polygon $\mathcal{P}$ is not inside as
well as not inaccessible when:
$Inside_{\mathcal{P}}(\mathcal{S}) \in \{0\}$ iff
$Inacc_{\mathcal{P}}(\mathcal{S}) \in \{0\}$
\label{theo:case_1_b}
\end{theorem}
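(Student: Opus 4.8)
The plan is to prove Theorem~\ref{theo:case_1_b} as the contrapositive counterpart of Theorem~\ref{theo:case_1_a}, mirroring the two-directional structure of that earlier proof. Concretely, I would establish the biconditional by proving the two implications $Inacc_{\mathcal{P}}(\mathcal{S}) \in \{0\} \Rightarrow Inside_{\mathcal{P}}(\mathcal{S}) \in \{0\}$ and $Inside_{\mathcal{P}}(\mathcal{S}) \in \{0\} \Rightarrow Inacc_{\mathcal{P}}(\mathcal{S}) \in \{0\}$ separately, exactly as part (a) and part (b) were handled above.

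For direction (a), I would start from the hypothesis $Inacc_{\mathcal{P}}(\mathcal{S}) = 0$, which by Definition~\ref{def:inacc_case_1} means there are no valid chains to be \emph{broken}. The key step is the observation, already used implicitly in the proof of Theorem~\ref{theo:case_1_a}, that a chain is valid precisely when its epigraph or hypograph contains $\mathcal{S}$. Hence ``no valid chains'' forces that for every chain $\mathcal{C}$ in $f$, neither $\mathcal{S} \in epi(f_{\mathcal{C}})$ nor $\mathcal{S} \in hypo(f_{\mathcal{C}})$ holds. That is exactly the ``otherwise'' branch of Definition~\ref{def:inside_case_1}, so $Inside_{\mathcal{P}}(\mathcal{S}) = 0$. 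For direction (b), I would argue the contrapositive: if $Inacc_{\mathcal{P}}(\mathcal{S}) \neq 0$, then by Definition~\ref{def:inacc_case_1} there exist $\mathcal{N} \neq 0$ valid chains, so at least one chain $\mathcal{C}$ has $\mathcal{S} \in epi(f_{\mathcal{C}})$ or $\mathcal{S} \in hypo(f_{\mathcal{C}})$, which by Definition~\ref{def:inside_case_1} gives $Inside_{\mathcal{P}}(\mathcal{S}) = 1 \neq 0$; contraposing yields the claim. Alternatively one can run the direct argument: $Inside = 0$ means no chain's epigraph/hypograph contains $\mathcal{S}$, so no chain is valid, so there are zero valid chains to break, so $Inacc_{\mathcal{P}}(\mathcal{S}) = 0$.

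The main obstacle is not a deep technical difficulty but a definitional one: the argument hinges entirely on the equivalence ``chain $\mathcal{C}$ is valid $\iff$ $\mathcal{S} \in epi(f_{\mathcal{C}}) \cup hypo(f_{\mathcal{C}})$,'' which is stated only informally in Section~\ref{sec:novel} (the containment-of-endpoints criterion) and used without comment in the proof of Theorem~\ref{theo:case_1_a}. I would make this linkage explicit as the pivotal sentence of the proof, since both directions collapse to it. A secondary subtlety, as in the previous theorem, is the vertex case: when $\mathcal{S}$ sits on a vertex that is an intersection of several polygon edges, I should note that the contrapositive situation ($\mathcal{S}$ not on any such configuration with a containing chain) is precisely the degenerate-free case, so no additional bookkeeping of multiplicities is needed — ``zero valid chains'' is unambiguous regardless of how edges meet. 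I would close with a brief remark tying this back to the figures, observing that any $\mathcal{S}$ strictly outside the bounding box trivially falls in this case, consistent with the early-exit rule in Section~\ref{sec:novel}.

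\begin{proof}
(a) If $Inacc_{\mathcal{P}}(\mathcal{S}) \in \{0\}$ then $Inside_{\mathcal{P}}(\mathcal{S}) \in \{0\}$ \\\par

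Given that $Inacc_{\mathcal{P}}(\mathcal{S}) = 0$, by definition \ref{def:inacc_case_1} there are no valid chains that need to be \emph{broken}. Recall that a chain $\mathcal{C}$ is valid exactly when either its epigraph or its hypograph contains $\mathcal{S}$. Since there are no valid chains, it follows that for every chain $\mathcal{C}$ in $f$, $\mathcal{S} \notin epi(f_{\mathcal{C}})$ and $\mathcal{S} \notin hypo(f_{\mathcal{C}})$. But this is precisely the ``otherwise'' case of definition \ref{def:inside_case_1}, hence $Inside_{\mathcal{P}}(\mathcal{S}) = 0$ or $Inside_{\mathcal{P}}(\mathcal{S}) \in \{0\}$. \\\par

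(b) If $Inside_{\mathcal{P}}(\mathcal{S}) \in \{0\}$ then $Inacc_{\mathcal{P}}(\mathcal{S}) \in \{0\}$ \\\par

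Given $Inside_{\mathcal{P}}(\mathcal{S}) = 0$ implies, by definition \ref{def:inside_case_1}, that there is no chain $\mathcal{C}$ in $f$ with $\mathcal{S} \in epi(f_{\mathcal{C}})$ or $\mathcal{S} \in hypo(f_{\mathcal{C}})$. Since a chain is valid only when its epigraph or hypograph contains $\mathcal{S}$, there are no valid chains in $\mathcal{P}$ at all, and in particular no valid chain that needs to be \emph{broken}. Therefore, by definition \ref{def:inacc_case_1}, $Inacc_{\mathcal{P}}(\mathcal{S}) = 0$ or $Inacc_{\mathcal{P}}(\mathcal{S}) \in \{0\}$. \par

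This holds regardless of whether $\mathcal{S}$ lies on a vertex that is an intersection point of two or more lines of the polygon, since the count of valid chains is zero in either configuration. In particular, if $\mathcal{S}$ lies outside the bounding box of $\mathcal{P}$, no chain can contain it, so $Inside_{\mathcal{P}}(\mathcal{S}) = 0$ and $Inacc_{\mathcal{P}}(\mathcal{S}) = 0$, consistent with the early termination rule of the algorithm. \par
\end{proof}
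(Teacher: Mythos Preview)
Your proof is correct and follows essentially the same approach as the paper: both directions unwind Definitions~\ref{def:inacc_case_1} and~\ref{def:inside_case_1} through the identification of ``valid chain'' with ``chain whose epigraph or hypograph contains $\mathcal{S}$,'' and your direct argument in part~(b) matches the paper's exactly. The only difference is cosmetic---you make the definitional link more explicit and append a remark about the vertex and bounding-box cases, which the paper omits.
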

\begin{proof}
(a) If $Inacc_{\mathcal{P}}(\mathcal{S}) \in \{0\}$ then
$Inside_{\mathcal{P}}(\mathcal{S}) \in \{0\}$ \\\par
Given that $Inacc_{\mathcal{P}}(\mathcal{S}) = 0$, there
exists no valid chains that need to be \emph{broken} according to
definition \ref{def:inacc_case_1}. This means that $\mathcal{S} \notin
\{epi(f_{\mathcal{C}_{k}}), hypo(f_{\mathcal{C}_{k}})\}$ $\forall k$
valid chains in $\mathcal{P}$. Since no chain exists whoes epigraph or
hypograph contains $\mathcal{S}$, the status of $\mathcal{S}$ related
tp $\mathcal{P}$ is $Inside_{\mathcal{P}}(\mathcal{S}) = 0$ or
$Inside_{\mathcal{P}}(\mathcal{S}) \in \{0\}$. \\\par
(b) If $Inside_{\mathcal{P}}(\mathcal{S}) \in \{0\}$ then
$Inacc_{\mathcal{P}}(\mathcal{S}) \in \{0\}$ \\\par
Given $Inside_{\mathcal{P}}(\mathcal{S}) = 0$ implies that
$\mathcal{S} \notin \{epi(f_{\mathcal{C}_{k}}),
hypo(f_{\mathcal{C}_{k}})\}$ $\forall k$ chains in $\mathcal{P}$. This
means no valid chains exist in $\mathcal{P}$ that need to be
\emph{broken}. Thus the inaccessibility of $\mathcal{S}$ related to
$\mathcal{P}$ is zero, i.e. $Inacc_{\mathcal{P}}(\mathcal{S}) \in
\{0\}$, which is the desired result.
\end{proof}
Cases for theorem \ref{theo:case_1_b} are simple and depicted in
figure \ref{fig:case_1_b}. Figure \ref{fig:case_1_b} shows two different
polygons with $\mathcal{S}$ as the point under consideration. The
polygon in figure \ref{fig:case_1_b}.(A) has four chains that do not contain
$\mathcal{S}$ namely (a) RTU (b) UVR (c) RWX and (d) XYR, which are
invalid. Thus by theorem \ref{theo:case_1_b},
$Inside_{\mathcal{P}}(\mathcal{S}) = 0$ and
$Inacc_{\mathcal{P}}(\mathcal{S}) = 0$. Thus $\mathcal{S}$ lies
outside the polygon. Similarly, for figure \ref{fig:case_1_b}.(B)
there exists two chains that do not contain $\mathcal{S}$ i.e. (a)
RTUR and RVWR, which are invalid. So
$Inside_{\mathcal{P}}(\mathcal{S}) = 0$ and
$Inacc_{\mathcal{P}}(\mathcal{S}) = 0$. \par
\begin{figure}
\centering
\subfloat[$\mathcal{S}$ as a vertex point]{\label{fig:case_1_a}\includegraphics[width=.3\textwidth]{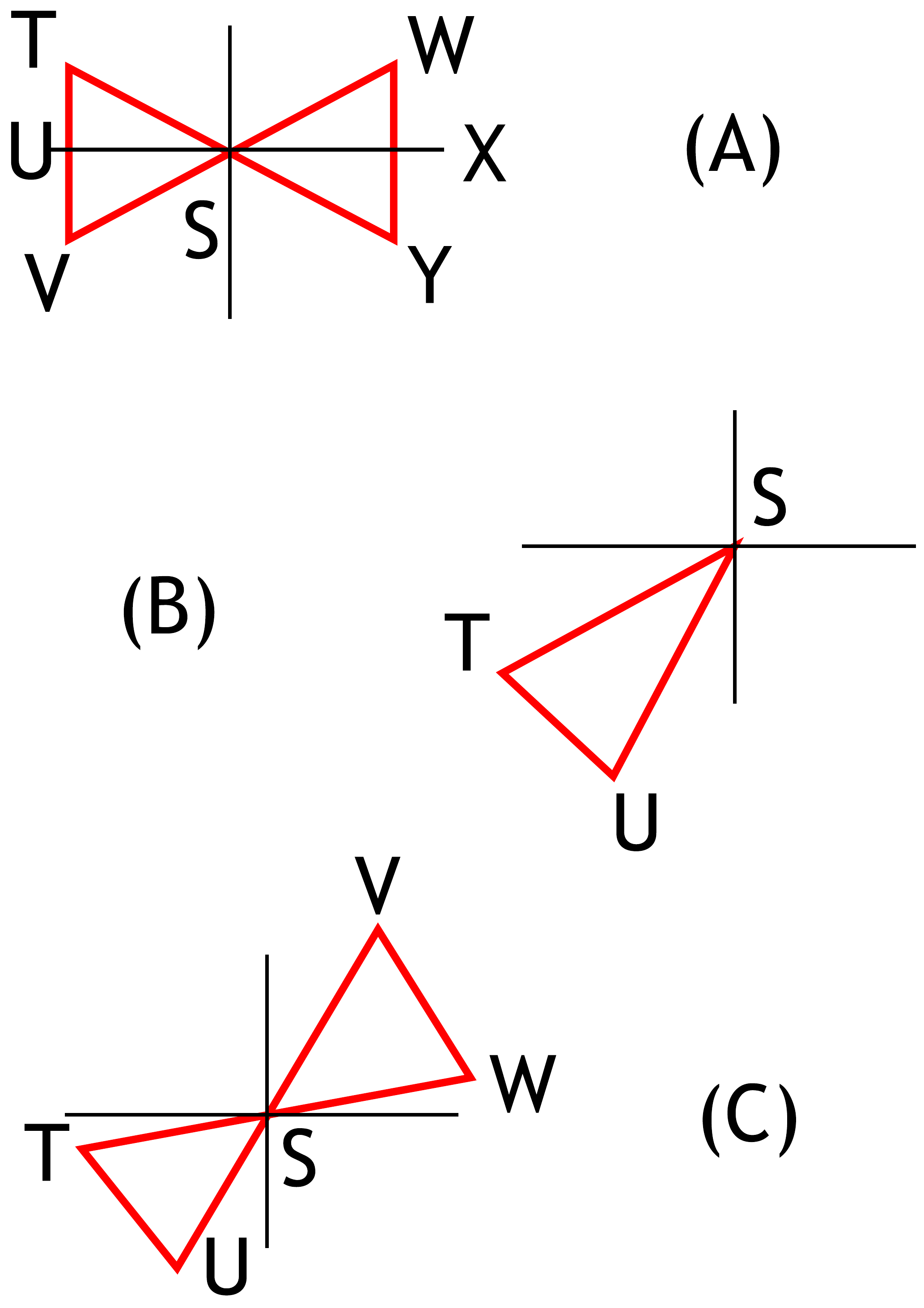}}
\subfloat[$\mathcal{S}$ not on the vertex]{\label{fig:case_1_b}\includegraphics[width=.3\textwidth]{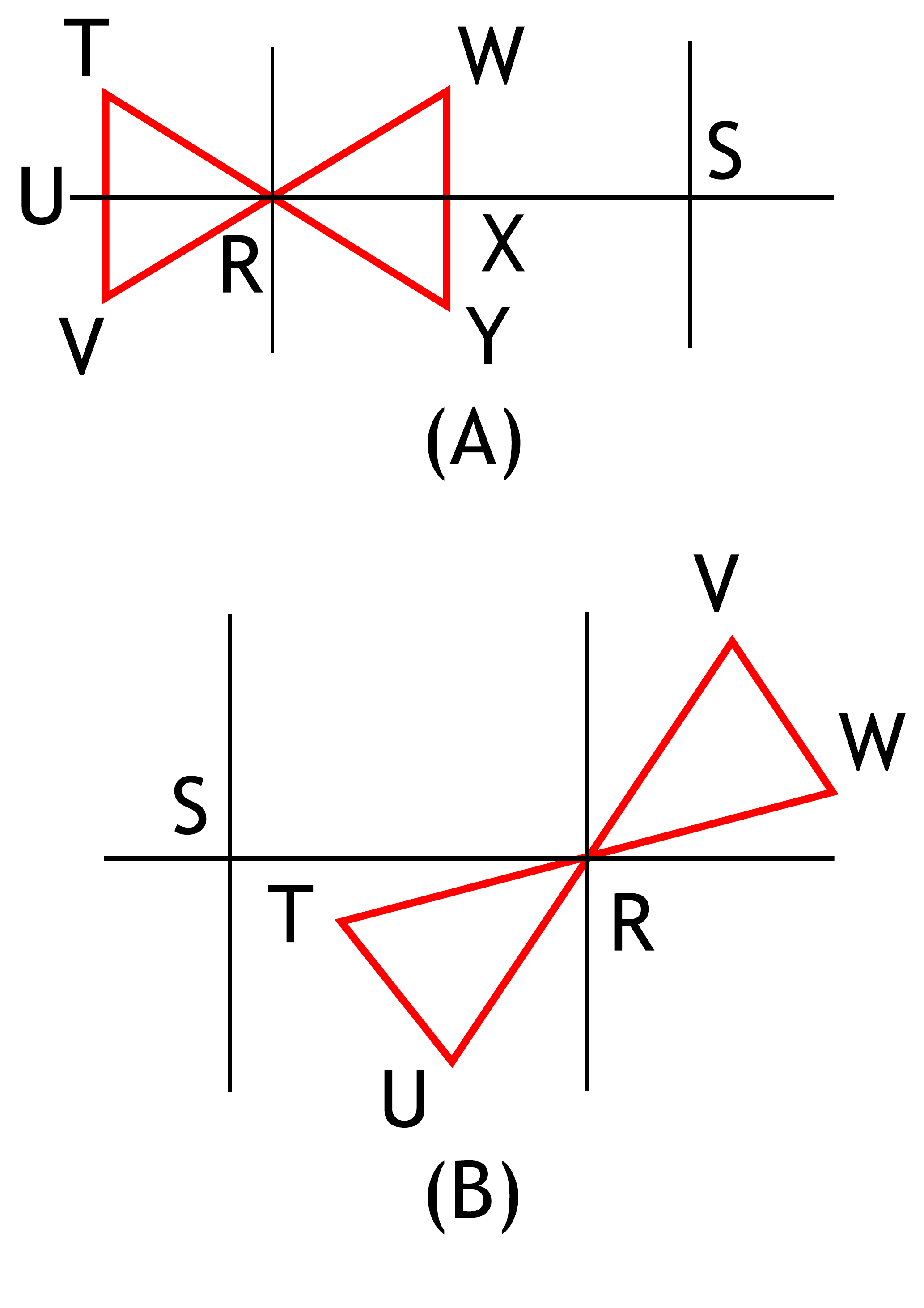}}
\caption{Polygons with locations of the point $\mathcal{S}$.}
\label{fig:case_1}
\end{figure}
\subsection{Point not on Vertex of Polygon}\label{sec:case_2}
Now for the general case of point in polygon, the definition of
inaccessibility and inside evolve slightly while preserving the
abstract essence of the idea. Again the theorems will be proved which
find a relation between when a point is inaccessibile as well as
inside the polygon. \par
\begin{definition}
The \emph{inaccessibility} $Inacc_{\mathcal{P}}(\mathcal{S})$ of a
point $\mathcal{S}$ related to a polygon $\mathcal{P}$, is the number
of valid chains that need to be \emph{broken and/or ignored}.
\[Inacc_{\mathcal{P}}(\mathcal{S}) = \left\{ 
\begin{array}{l l}
  1, & \mbox{a pair of chains need to be \emph{broken}}\\
  \mathcal{N}, & \mbox{$\mathcal{N} \neq 0$ pairs of chains to be \emph{ignored}}\\
  1+\mathcal{N}, & \mbox{a pair to be \emph{broken} and}\\ 
  & \mbox{$\mathcal{N}$ pairs to be \emph{ignored}}\\ \end{array} \right. \]
\label{def:inacc_case_2}
\end{definition}
\begin{definition}
The status of a point $\mathcal{S}$ related to a polygon
$\mathcal{P}$, that is $Inside_{\mathcal{P}}(\mathcal{S})$, is the
existance of a pair of chains $\mathcal{C}_{i}$ and $\mathcal{C}_{j}$
such that $\mathcal{S} \in epi(f_{\mathcal{C}_{i}})$ and $\mathcal{S}
\in hypo(f_{\mathcal{C}_{j}})$.
\[Inside_{\mathcal{P}}(\mathcal{S}) = \left\{ 
\begin{array}{l l}
  1, & \mbox{pairs of chains $\mathcal{C}_{i}$ and $\mathcal{C}_{j}$,}\\
    & \mbox{s.t. $\mathcal{S} \in epi(f_{\mathcal{C}_{i}})$ and
      $\mathcal{S} \in hypo(f_{\mathcal{C}_{j}})$}\\
  0, & \mbox{otherwise}\\ \end{array} \right. \]
\label{def:inside_case_2}
\end{definition}
Again the relation between inaccessibility and inside of a polygon is
proved via two theorems. The theorems are as follows: \par
\begin{theorem}
A point $\mathcal{S}$ related to a polygon $\mathcal{P}$ is inside as
well as inaccessible when:
$Inside_{\mathcal{P}}(\mathcal{S}) \in \{1\}$ iff
$Inacc_{\mathcal{P}}(\mathcal{S}) \in \{1, 1+\mathcal{N}\}$
\label{theo:case_2_a}
\end{theorem}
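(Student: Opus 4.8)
The plan is to mirror the structure of the proof of Theorem~\ref{theo:case_1_a}, splitting the biconditional into the two implications and, in each direction, translating the defining condition of one concept directly into the defining condition of the other via the notion of a valid chain. First I would prove that $Inacc_{\mathcal{P}}(\mathcal{S}) \in \{1, 1+\mathcal{N}\}$ implies $Inside_{\mathcal{P}}(\mathcal{S}) \in \{1\}$. By Definition~\ref{def:inacc_case_2}, an inaccessibility value of $1$ or $1+\mathcal{N}$ means there is a pair of chains that must be \emph{broken}. A pair flagged for breaking is, by the discarding criterion described after Definition~\ref{def:chain} together with Definitions~\ref{def:epigraph} and \ref{def:hypograph}, precisely a pair $\mathcal{C}_i, \mathcal{C}_j$ whose endpoints contain $\mathcal{S}$ in the required way, i.e. $\mathcal{S} \in epi(f_{\mathcal{C}_i})$ and $\mathcal{S} \in hypo(f_{\mathcal{C}_j})$. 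That is exactly the condition in Definition~\ref{def:inside_case_2}, so $Inside_{\mathcal{P}}(\mathcal{S}) = 1$. The only subtlety here is that the value $\mathcal{N}$ alone (with no pair to be broken) must \emph{not} occur when the point is inside; I would handle this by noting that pairs marked merely for \emph{ignoring} are, by construction of the nearest-valid-chains test (Definition~\ref{def:nearestchain}), pairs lying strictly above or strictly below $\mathcal{S}$, hence they do not sandwich $\mathcal{S}$, and so the presence of such pairs alone cannot certify insideness — which is why the theorem restricts the inaccessibility value to $\{1, 1+\mathcal{N}\}$ rather than allowing a bare $\mathcal{N}$.

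Next I would prove the converse: $Inside_{\mathcal{P}}(\mathcal{S}) \in \{1\}$ implies $Inacc_{\mathcal{P}}(\mathcal{S}) \in \{1, 1+\mathcal{N}\}$. Assuming $Inside_{\mathcal{P}}(\mathcal{S}) = 1$, Definition~\ref{def:inside_case_2} yields a pair of chains $\mathcal{C}_i, \mathcal{C}_j$ with $\mathcal{S} \in epi(f_{\mathcal{C}_i})$ and $\mathcal{S} \in hypo(f_{\mathcal{C}_j})$; such a pair is valid and must be \emph{broken}, contributing the base value $1$ to the inaccessibility count. I would then argue by cases on whether any further valid chains exist: if the nearest-valid-chains $\mathcal{C}_i, \mathcal{C}_j$ (in the sense of Definition~\ref{def:nearestchain}) are the only valid chains containing $\mathcal{S}$ in their epigraph/hypograph, the count is exactly $1$; if in addition there are $\mathcal{N}$ further pairs of valid chains (those lying entirely above or entirely below $\mathcal{S}$ whose epigraphs or hypographs contain $\mathcal{S}$), each such pair contributes an ignore, giving $1+\mathcal{N}$. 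In either case $Inacc_{\mathcal{P}}(\mathcal{S}) \in \{1, 1+\mathcal{N}\}$, as required.

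The main obstacle I anticipate is not any computation but the bookkeeping needed to make ``broken'' versus ``ignored'' precise enough to invoke cleanly: one must be sure that the pair realizing the $epi$/$hypo$ sandwich of Definition~\ref{def:inside_case_2} is always the pair that gets \emph{broken} (equivalently, the nearest valid chains of Definition~\ref{def:nearestchain}), and that every remaining valid-chain pair is genuinely separated from $\mathcal{S}$ and hence only \emph{ignored}, never misclassified. I would therefore insert a short lemma-style remark establishing that the nearest valid chains $\mathcal{C}_i$ (below) and $\mathcal{C}_j$ (above) are exactly the pair whose epigraph and hypograph jointly contain $\mathcal{S}$, with all other valid chains strictly nested outside the region between $\mathcal{C}_i$ and $\mathcal{C}_j$; this makes both the ``broken'' identification and the ``$\mathcal{N}$ ignored'' count unambiguous, after which both implications follow by direct substitution into the relevant definitions, closing the proof.
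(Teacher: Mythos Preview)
Your proposal is correct and follows the same two-direction skeleton as the paper, but the execution diverges in two places worth noting. For the forward direction the paper does not leave ``broken'' at the level of the discarding criterion: it explicitly draws the vertical line $x=x_0$, sorts the $2(1+\mathcal{N})$ valid chains by their intersection heights, forms consecutive pairs, and uses the affine-combination test (Definitions~\ref{def:affineset}--\ref{def:affinecomb}) to single out the unique pair $(\mathcal{C}_i,\mathcal{C}_{i+1})$ for which $\mathcal{S}$ is an affine combination of the two intersection points; this is how it certifies that the broken pair really is the nearest pair and that the remaining $\mathcal{N}$ pairs are ignored. Your ``lemma-style remark'' is exactly this step, so you are not missing it, just packaging it differently. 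The more substantive difference is in the converse: from the $epi$/$hypo$ pair $\mathcal{C}_i,\mathcal{C}_j$ the paper joins their endpoints to form an auxiliary closed loop $\mathcal{P}'$ and then invokes Hopf's degree theorem to argue that $\mathcal{P}'$ can be deformed into $\mathcal{P}$ (same winding number about $\mathcal{S}$), thereby transporting the ``pair to be broken'' back to $\mathcal{P}$. You bypass this entirely by arguing directly from the definitions that the sandwiching pair is valid and must be broken, with any further valid pairs counted as ignored. Your route is more elementary and, given how loosely the paper actually uses Hopf here, arguably cleaner; the paper's route buys a topological justification that the chain pair genuinely sits inside the original polygon rather than some reconstituted loop, which your version takes for granted.
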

\begin{proof}
(a) If $Inacc_{\mathcal{P}}(\mathcal{S}) \in \{1, 1+\mathcal{N}\}$
then $Inside_{\mathcal{P}}(\mathcal{S}) \in \{1\}$.\\\par
Given $Inacc_{\mathcal{P}}(\mathcal{S}) \in \{1, 1 + \mathcal{N}\}$
implies that there exist a pair of valid chain in $\mathcal{P}$ that
need to be \emph{broken} and/or $\mathcal{N}$ pairs of valid chains
that need to be ignored. A valid chain by definition is one whoes
epigraph or hypograph contains $\mathcal{S}$. Taking the general case
of $1+\mathcal{N}$ (if $\mathcal{N} = 0$, $1+\mathcal{N}$ collapses to
$1$), there are $2\times(1+\mathcal{N})$ valid chains such that half
lie above/on $\mathcal{S}$ and the rest half lie below/on
$\mathcal{S}$. If a vertical line passing through $x = x_{0}$ is drawn
such that it cuts the valid chains and $\mathcal{S}$, then the chains
can be sorted according to the value of intersection points in $x =
x_{0}$. Let $\mathcal{C}_{1}$ ,...,
$\mathcal{C}_{2\times(1+\mathcal{N})-1}$,
$\mathcal{C}_{2\times(1+\mathcal{N})}$ be the sorted order of chains
from bottom to top. Taking consecutive pairs of these valid chains
i.e. ($\mathcal{C}_{1}$, $\mathcal{C}_{2}$), ($\mathcal{C}_{3}$,
$\mathcal{C}_{4}$), ..., ($\mathcal{C}_{i}$, $\mathcal{C}_{i+1}$),
..., ($\mathcal{C}_{2\times(1+\mathcal{N})-1}$,
$\mathcal{C}_{2\times(1+\mathcal{N})}$), it is easy to know whether
$\mathcal{S}$ is an affine combination of ($x_{0},
y_{\mathcal{C}_{k}}^{int}$) and ($x_{0},
y_{\mathcal{C}_{k+1}}^{int}$), $\forall \in
\{1, 3, 5,... , 2\times(1+\mathcal{N}) - 1\}$. Here
$y_{\mathcal{C}_{k}}^{int}$ and $y_{\mathcal{C}_{k+1}}^{int}$ are the
intersection points on the chains $k$ and $k+1$ due to the line $x =
x_{0}$. \par
Since it is known that atleast one pair of chains need to be
\emph{broken}, a pair of points ($x_{0}, y_{\mathcal{C}_{i}}^{int}$)
and ($x_{0}, y_{\mathcal{C}_{i+1}}^{int}$) exists for which
$\mathcal{S}$ ($x_{0}, y_{0}$) $\equiv$ ($x_{0},
\theta y_{\mathcal{C}_{i}}^{int}+(1-\theta)
y_{\mathcal{C}_{i+1}}^{int}$) for $0 \leq \theta \leq 1$. This implies
that $\mathcal{C}_{i}$ is the nearest chain below/on $\mathcal{S}$ and
$\mathcal{C}_{i+1}$ is the nearest chain above/on $\mathcal{S}$,
otherwise $\mathcal{S}$ won't be an affine combination of ($x_{0},
y_{\mathcal{C}_{i}}^{int}$) and ($x_{0},
y_{\mathcal{C}_{i+1}}^{int}$). For the rest of the $\mathcal{N}$
pairs, since $\mathcal{S}$ is not an affine combination of ($x_{0},
y_{\mathcal{C}_{k}}^{int}$) and ($x_{0}, y_{\mathcal{C}_{k+1}}^{int}$)
$\forall k \in \{1, 3, ..., 2\times(1+\mathcal{N})-1\} - \{i\}$, these
$\mathcal{N}$ pairs of chains can be \emph{ignored} from further
processing or consideration. Since these nearest chains
$\mathcal{C}_{i}$ and $\mathcal{C}_{i+1}$ are also valid, their
epigraph and hypograph contain $\mathcal{S}$, respectively. This
existence of a pair of a valid chains which has to be \emph{broken}
such that $\mathcal{S} \in epi(f_{\mathcal{C}_{i}})$ and $\mathcal{S}
\in hypo(f_{\mathcal{C}_{i+1}})$ implies
$Inside_{\mathcal{P}}(\mathcal{S}) = 1$, the status of $\mathcal{S}$
related to $\mathcal{P}$. \\\par
(b) If $Inside_{\mathcal{P}}(\mathcal{S}) \in \{1\}$ then
$Inacc_{\mathcal{P}}(\mathcal{S}) \in \{1, 1+\mathcal{N}\}$ \\\par
Let $\mathcal{P}$ be a polygon such that
$Inside_{\mathcal{P}}(\mathcal{S}) = 1$ implies there exists a
pair of chains $\mathcal{C}_{i}$ and $\mathcal{C}_{j}$ such that
$\mathcal{S} \in epi(f_{\mathcal{C}_{i}})$ and $\mathcal{S} \in
hypo(f_{\mathcal{C}_{j}})$. Given only these two chains, it is evident
that both of them are nearest chins to $\mathcal{S}$. Let the starting
and ending points of $\mathcal{C}_{i}$ and $\mathcal{C}_{j}$ be
$\{(x_{\mathcal{C}_{i_{s}}}^{int}, y_{0}),
(x_{\mathcal{C}_{i_{e}}}^{int}, y_{0})\}$ and
$\{(x_{\mathcal{C}_{j_{s}}}^{int}, y_{0}),
(x_{\mathcal{C}_{j_{e}}}^{int}, y_{0})\}$, respectively. If a vertical 
line $x = x_{0}$ is drawn through ($x_{0}, y_{0}$) it would interest
the chains $\mathcal{C}_{i}$ and $\mathcal{C}_{j}$ at ($x_{0},
y_{\mathcal{C}_{i}}^{int}$) and ($x_{0}, y_{\mathcal{C}_{j}}^{int}$),
respectively. Since ($x_{0}, y_{\mathcal{C}_{i}}^{int}$) lies below
($x_{0}, y_{0}$) and ($x_{0}, y_{\mathcal{C}_{j}}^{int}$) lies above
($x_{0}, y_{0}$), $\mathcal{S}$ is an affine combination of ($x_{0},
y_{\mathcal{C}_{i}}^{int}$) and ($x_{0},
y_{\mathcal{C}_{j}}^{int}$). Thus $\mathcal{S}$ lies between
$\mathcal{C}_{i}$ and $\mathcal{C}_{j}$. Now, if an end point of
$\mathcal{C}_{i}$ is joined with an end point of $\mathcal{C}_{j}$ and
another end point of the former joined to the remaining end point of
the latter, then a closed loop is formed such that traversing once
from any one point, lead to the same point in the end. Let this loop
be $\mathcal{P}^{'}$. \par
As long as the winding number of $\mathcal{P}^{'}$ is around
$\mathcal{S}$ is same as that of $\mathcal{P}$ around $\mathcal{S}$,
$\mathcal{P}^{'}$ can be deformed into $\mathcal{P}$, by Hopf's degree
theorem \cite{Needham:1999}. Since $\mathcal{P}$ is formed from the
two necessary chains $\mathcal{C}_{i}$ and $\mathcal{C}_{j}$ which are
valid, a pair exists in polygon $\mathcal{P}$ that needs to be
\emph{broken}. Thus the minimum inaccessibility of $\mathcal{S}$
related to $\mathcal{P}$ is $Inacc_{\mathcal{P}}(\mathcal{S}) = 1$. If
there exists extra pairs of valid chains, then they would be
\emph{ignored} from consideration, while checking for the affine
combination criteria of $\mathcal{S}$ with respect to the sorted pairs
of chains on $x = x_{0}$. If $\mathcal{N}$ is the minimum number of
pairs of valid chains that are \emph{ignored}, then the
inaccessibility of $\mathcal{S}$ related to $\mathcal{P}$ is
$Inaccp_{\mathcal{P}}(\mathcal{S}) = 1+\mathcal{N}$. Thus
$Inacc_{\mathcal{P}}(\mathcal{S}) = \{1, 1+\mathcal{N}\}$.
\end{proof}
Now, the cases of theorem \ref{theo:case_2_a} are presented with
visual representations in figures \ref{fig:case_2_a_1} and
\ref{fig:case_2_a_2}. In figure \ref{fig:case_2_a_1}.(A), $3$ chains
exist of which two are valid. The valid chains are (a) UV and (b)
WU. The chain VW is invalid. Thus $Inside_{\mathcal{P}}(\mathcal{S}) =
1$ and $Inacc_{\mathcal{P}}(\mathcal{S}) = 1$. In figure
\ref{fig:case_2_a_1}.(B) the point lies on the edge and the polygon
can be divided into $3$ chains of which only two contain
$\mathcal{S}$. These valid chains are (a) SU and (b) VS. Thus
$Inside_{\mathcal{P}}(\mathcal{S}) = 1$ and
$Inacc_{\mathcal{P}}(\mathcal{S}) = 1$. In part (C) of figure
\ref{fig:case_2_a_1}, the horizontal line cuts through two edges and
touches two vertices. Thus there exists $4$ chains of which two are
valid, namely (a) UV and (b) XU which contain $\mathcal{S}$. Thus
$Inside_{\mathcal{P}}(\mathcal{S}) = 1$ and
$Inacc_{\mathcal{P}}(\mathcal{S}) = 1$. Lastly, in part (D) of the
same figure, $\mathcal{S}$ lies on an edge and the horizontal line
passes through a vertex. In this case, there exist $5$ chains of which
two contain the point of test and are thus valid. They are (a) VW and
(b) YU. Thus $Inside_{\mathcal{P}}(\mathcal{S}) = 1$ and
$Inacc_{\mathcal{P}}(\mathcal{S}) = 1$. \par
Next, in figure \ref{fig:case_2_a_2}.(A) $8$ chains exist namely, (a)
YZ (b) ZW (c) WB (d) BU (e) UV (f) VA (g) AX and (h) XY, of which $6$
are valid except UV and XY. These two do not contain
$\mathcal{S}$. Now, the pair that needs to be \emph{broken} is YZ and
ZW, while the pairs (WB, AX) and (BU, VA) are to be \emph{ignored}
from consideration. Thus, $Inside_{\mathcal{P}}(\mathcal{S}) = 1$ and
$Inacc_{\mathcal{P}}(\mathcal{S}) = 1+2$ (i.e one pair requires to be
broken and two need to be ignored from consideration). Finally, in
figure \ref{fig:case_2_a_2}.(B) two chains exist of which both are
valid, i.e. (a) UV and (b) VWWU. Thus
$Inside_{\mathcal{P}}(\mathcal{S}) = 1$ and
$Inacc_{\mathcal{P}}(\mathcal{S}) = 1$. \par
\begin{figure}
\centering
\subfloat[$\mathcal{S}$ not on the vertex]{\label{fig:case_2_a_1}\includegraphics[width=.3\textwidth]{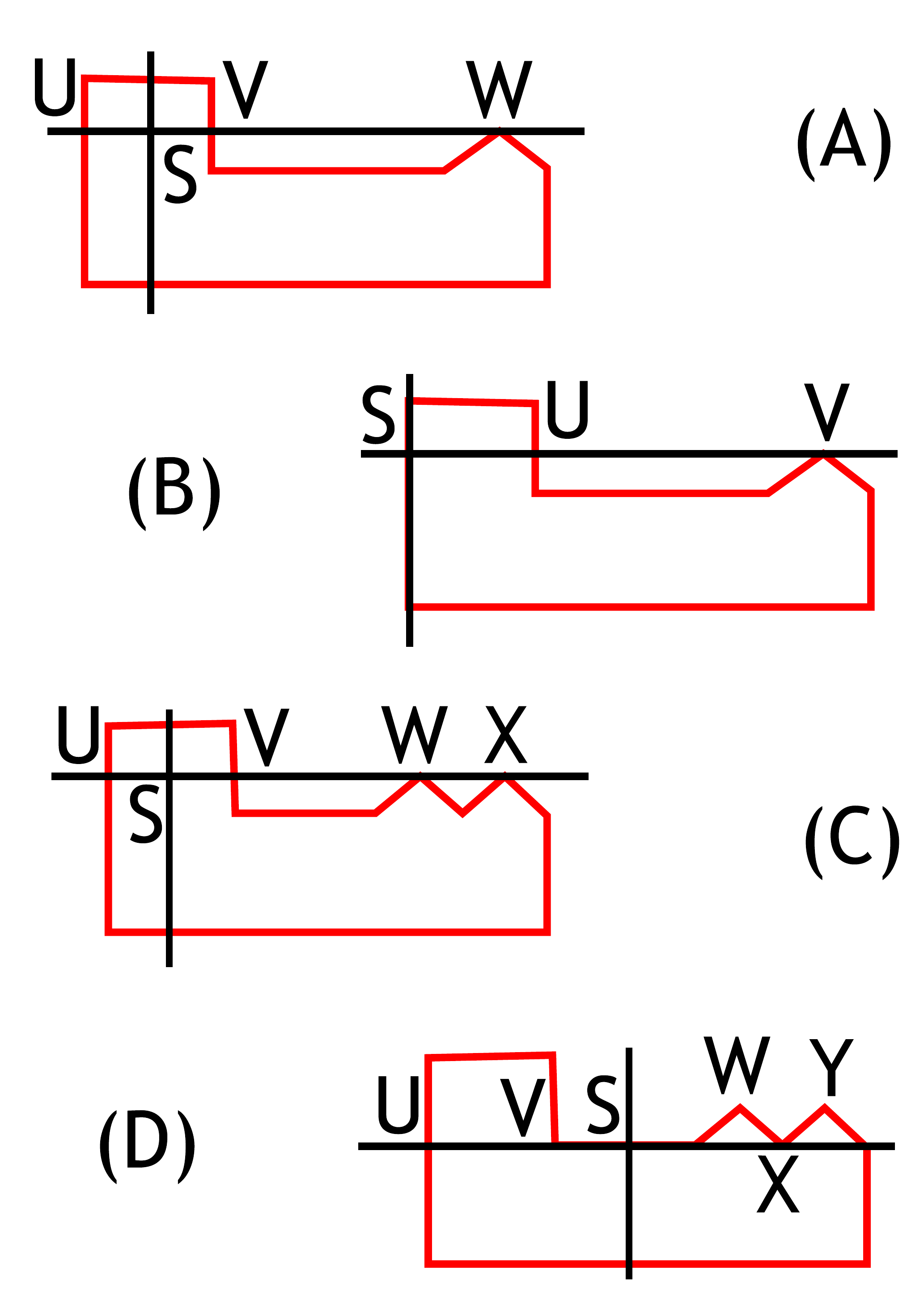}}
\subfloat[$\mathcal{S}$ not on the vertex]{\label{fig:case_2_a_2}\includegraphics[width=.3\textwidth]{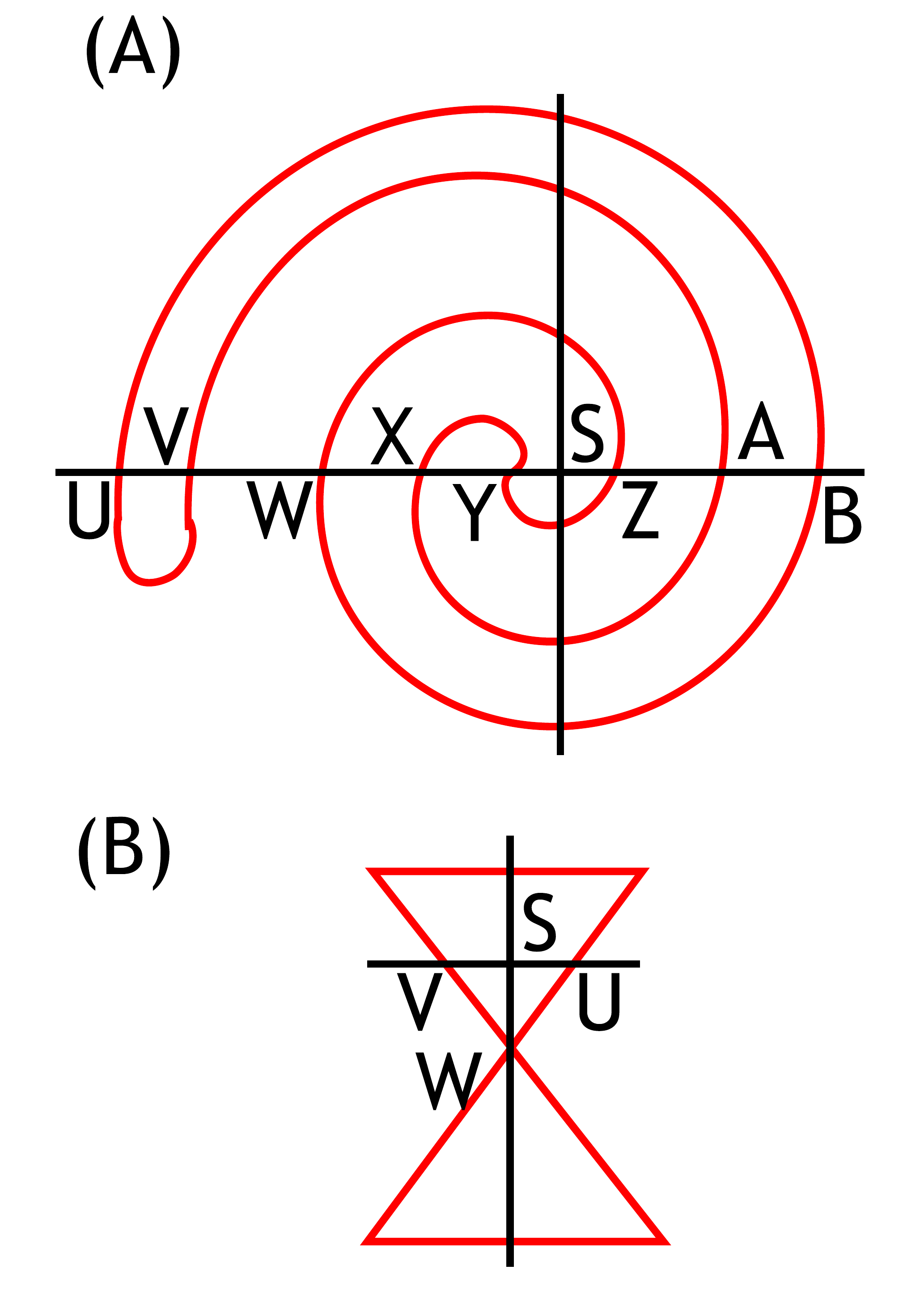}}
\caption{Polygons with locations of the point $\mathcal{S}$.}
\label{fig:case_2}
\end{figure}
\begin{theorem}
A point $\mathcal{S}$ related to a polygon $\mathcal{P}$ is not inside as
well as inaccessible when:
$Inside_{\mathcal{P}}(\mathcal{S}) \in \{0\}$ iff
$Inacc_{\mathcal{P}}(\mathcal{S}) \in \{\mathcal{N}\}$
\label{theo:case_2_b}
\end{theorem}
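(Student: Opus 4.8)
The plan is to mirror the structure of the proof of Theorem~\ref{theo:case_1_b}, adapting it to the paired-chain setting of Subsection~\ref{sec:case_2}. As in the earlier theorems, I would split the biconditional into its two directions and argue each separately by unwinding Definitions~\ref{def:inacc_case_2} and~\ref{def:inside_case_2}.

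For the direction (a), assume $Inacc_{\mathcal{P}}(\mathcal{S}) \in \{\mathcal{N}\}$. Inspecting Definition~\ref{def:inacc_case_2}, the value $\mathcal{N}$ (as opposed to $1$ or $1+\mathcal{N}$) is precisely the case in which there is no pair of valid chains that needs to be \emph{broken}; every valid pair is merely \emph{ignored}. I would argue that ``ignored'' here means exactly that $\mathcal{S}$ fails to be an affine combination (in the sense of Definition~\ref{def:affinecomb}) of the two intersection points that a chain-pair cuts off on the vertical line $x = x_{0}$. Drawing that vertical line and sorting the intersection points by height, as in the proof of Theorem~\ref{theo:case_2_a}, I would note that if $\mathcal{S}$ lay in $epi(f_{\mathcal{C}_i})$ for some chain $\mathcal{C}_i$ below it \emph{and} in $hypo(f_{\mathcal{C}_j})$ for some chain above it, then the consecutive-pair containing $\mathcal{S}$ would have to be broken, contradicting that no pair is broken. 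Hence no such pair $(\mathcal{C}_i,\mathcal{C}_j)$ exists, which by Definition~\ref{def:inside_case_2} gives $Inside_{\mathcal{P}}(\mathcal{S}) = 0$, i.e. $Inside_{\mathcal{P}}(\mathcal{S}) \in \{0\}$.

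For the converse direction (b), assume $Inside_{\mathcal{P}}(\mathcal{S}) \in \{0\}$. By Definition~\ref{def:inside_case_2}, there is no pair of chains $\mathcal{C}_i,\mathcal{C}_j$ with $\mathcal{S}\in epi(f_{\mathcal{C}_i})$ and $\mathcal{S}\in hypo(f_{\mathcal{C}_j})$. I would then show that no valid pair can require \emph{breaking}: a pair requires breaking exactly when $\mathcal{S}$ is a (nontrivial) affine combination of its two intersection heights on $x = x_{0}$, which would force $\mathcal{S}$ into the epigraph of the lower chain and the hypograph of the upper one, the very situation we have excluded. Therefore every valid pair of chains is \emph{ignored}, and if $\mathcal{N}$ denotes the number of such ignored pairs, Definition~\ref{def:inacc_case_2} gives $Inacc_{\mathcal{P}}(\mathcal{S}) = \mathcal{N}$, i.e. $Inacc_{\mathcal{P}}(\mathcal{S}) \in \{\mathcal{N}\}$ (with $\mathcal{N}=0$ when there are no valid chains at all, recovering the plain ``outside'' case).

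The main obstacle, as in the previous theorems, is less a deep mathematical difficulty than a bookkeeping one: I must argue carefully that the notions ``broken'' and ``ignored'' from Definition~\ref{def:inacc_case_2} correspond cleanly to the affine-combination test on the vertical line, and that valid chains always come in pairs straddling $\mathcal{S}$ — one contributing to the epigraph side, one to the hypograph side — so that the consecutive-pairing of sorted intersection points is well defined. I would handle this by invoking the same sorting-and-pairing construction used in the proof of Theorem~\ref{theo:case_2_a}, and by noting that a valid chain by definition has $\mathcal{S}$ in its epigraph or hypograph, so the only way for $Inside$ to fail is for all such chains to sit strictly on one side of $\mathcal{S}$ after pairing — which is exactly the ``$\mathcal{N}$ pairs ignored, none broken'' branch of the inaccessibility definition. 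Edge and vertex contact of the horizontal line with $\mathcal{P}$ would be dispatched by the same remarks already made in Subsection~\ref{sec:case_2} about the slightly modified but abstractly unchanged meaning of the definitions.
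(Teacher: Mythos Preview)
Your proposal is correct, and part~(a) is essentially identical to the paper's argument: both draw the vertical line $x=x_0$, sort the intersection points, observe that ``$\mathcal{N}$ pairs ignored'' means $\mathcal{S}$ is never an affine combination of a consecutive pair, and conclude that no pair $(\mathcal{C}_i,\mathcal{C}_j)$ with $\mathcal{S}\in epi(f_{\mathcal{C}_i})$ and $\mathcal{S}\in hypo(f_{\mathcal{C}_j})$ exists.

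Part~(b), however, takes a genuinely different route. You argue \emph{directly}: from $Inside_{\mathcal{P}}(\mathcal{S})=0$ you deduce that no pair can require breaking (since a broken pair would place $\mathcal{S}$ in the epigraph of the lower chain and the hypograph of the upper one), hence every valid pair is ignored and $Inacc_{\mathcal{P}}(\mathcal{S})=\mathcal{N}$. The paper instead declares the direct route ``difficult'' and proves the \emph{contrapositive}: $Inacc_{\mathcal{P}}(\mathcal{S})\notin\{\mathcal{N}\}$ forces $Inacc_{\mathcal{P}}(\mathcal{S})\in\{1,1+\mathcal{N}\}$, and then part~(a) of Theorem~\ref{theo:case_2_a} is invoked to get $Inside_{\mathcal{P}}(\mathcal{S})\in\{1\}$. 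Your direct argument is more self-contained and makes the correspondence between ``broken'' and the affine-combination test explicit; the paper's contrapositive is shorter because it reuses work already done, at the cost of an extra dependency on the previous theorem. Both are valid.
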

\begin{proof}
(a) If $Inacc_{\mathcal{P}}(\mathcal{S}) \in \{\mathcal{N}\}$ then
$Inside_{\mathcal{P}}(\mathcal{S}) \in \{0\}$. \\\par
Given $Inacc_{\mathcal{P}}(\mathcal{S}) = \mathcal{N}$ implies pairs
of valid chains need to be \emph{ignored}. Again, a valid chain by
definition is one whoes epigraph and hypograph contain
$\mathcal{S}$. If a vertical line passing through $x = x_{0}$ is drawn
such that it cuts the valid chains and $\mathcal{S}$, then the chains
can be sorted according to the value of intersection points in $x =
x_{0}$. Let $\mathcal{C}_{1}, ...,\mathcal{C}_{2\times\mathcal{N}}$ be
the sorted order of chains from bottom to top. Taking consecutive
pairs of these valid chains, i.e. ($\mathcal{C}_{1}$,
$\mathcal{C}_{2}$), ($\mathcal{C}_{3}$, $\mathcal{C}_{4}$),...,
($\mathcal{C}_{i}$, $\mathcal{C}_{j}$), ...,
($\mathcal{C}_{2\times\mathcal{N}-1}$,
$\mathcal{C}_{2\times\mathcal{N}}$), it is easy to know whether
$\mathcal{S}$ is an affine combination of ($x_{0}$,
$y_{\mathcal{C}_{k}}^{int}$) and ($x_{0}$,
$y_{\mathcal{C}_{k+1}}^{int}$) $\forall k \in \{1, 3,
...,2\times\mathcal{N}-1\}$. Since $\mathcal{N}$ pairs need to be
\emph{ignored}, it is evident that $\mathcal{S}$ is not an affine
combination of any of the above pairs. This suggests that there does
not exist a pair such that $\mathcal{S} \in epi(f_{\mathcal{C}_{k}})$
and $\mathcal{S} \in hypo(f_{\mathcal{C}_{k+1}})$, that can be
\emph{broken}. Since no such pair exists, the status of $\mathcal{S}$
related to $\mathcal{P}$ is $Inside_{\mathcal{P}}(\mathcal{S}) = 0$,
which is the desired result. \\\par
(b) If $Inside_{\mathcal{P}}(\mathcal{S}) \in \{0\}$ then
$Inacc_{\mathcal{P}}(\mathcal{S}) \in \{\mathcal{N}\}$. \\\par
Given $Inside_{\mathcal{P}}(\mathcal{S}) = 0$ implies there does not
exist a pair of chains $\mathcal{C}_{i}$ and $\mathcal{C}_{j}$ such
that $\mathcal{S} \in epi(f_{\mathcal{C}_{i}})$ and $\mathcal{S} \in
hypo(f_{\mathcal{C}_{j}})$. Thus it is difficult to proceed with the
proof. Instead, by proving its contrapositive the above statement will
hold. Thus if $Inacc_{\mathcal{P}}(\mathcal{S}) \notin
\{\mathcal{N}\}$ then $Inside_{\mathcal{P}}(\mathcal{S}) \notin
\{0\}$. Since $Inacc_{\mathcal{P}}(\mathcal{S}) \notin
\{\mathcal{N}\}$, it implies that $Inacc_{\mathcal{P}}(\mathcal{S}) \in
\{1, 1+\mathcal{N}\}$. It has been proved in part (a) above that if
$Inacc_{\mathcal{P}}(\mathcal{S}) \in \{1, 1+\mathcal{N}\}$ then
$Inside_{\mathcal{P}}(\mathcal{S}) \in \{1\}$. But
$Inside_{\mathcal{P}}(\mathcal{S}) \in \{1\}$ also means that
$Inside_{\mathcal{P}}(\mathcal{S}) \notin \{0\}$. Thus
$Inacc_{\mathcal{P}}(\mathcal{S}) \notin \{\mathcal{N}\}$ implies that
$Inside_{\mathcal{P}}(\mathcal{S}) \notin \{0\}$. Since the
contrapositive holds, so does the original statement. 
\end{proof}
Finally, two cases for theorem \ref{theo:case_2_b} are depicted in
figure\ref{fig:case_2_b}. In figure \ref{fig:case_2_b}.(A) $8$ chains
exist namely, (a) UV (b) VA (c) AX (d) XY (e) YZ (f) ZW (g) WB and (h)
BU. Of these, $3$ pairs exist, all of which are valid chains but need
to be \emph{ignored} as non of them enclose $\mathcal{S}$ as an affine
combination of two intersection points on $x = x_{0}$. These pairs are
(WB, AX), (XY, ZW) and (VA, BU). Thus
$Inside_{\mathcal{P}}(\mathcal{S}) = 0$ and
$Inacc_{\mathcal{P}}(\mathcal{S}) = 3$. For the case of intersecting
polygon in figure \ref{fig:case_2_b} $6$ chains exist namely, (a) ZU
(b) UY (c) YW (d) WX (e) VX and (f) XZ, of which WY and XZ are not
valid. Remaining pairs of valid chains need to be \emph{ignored} and
thus $Inside_{\mathcal{P}}(\mathcal{S}) = 0$ and
$Inacc_{\mathcal{P}}(\mathcal{S}) = 2$. \par
\begin{figure}[!t]
\centering
\includegraphics[width=4cm,height=6cm]{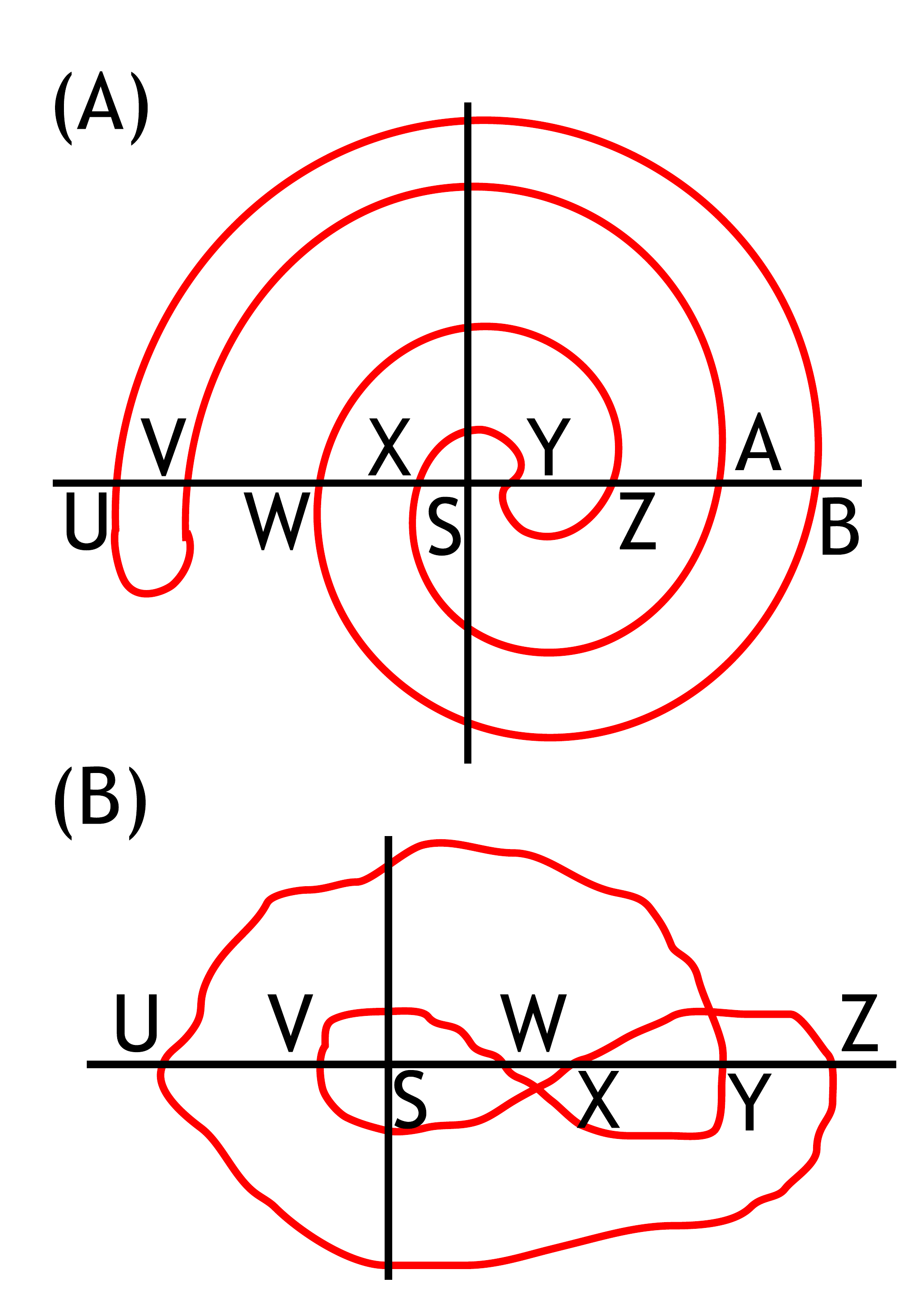}
\caption{Polygons with $\mathcal{S}$ outside.}
\label{fig:case_2_b}
\end{figure}
For the next few sections, let EH (epi/hypo-graph method) denote the proposed method.\par
%
\section{Crossover vs EH}\label{sec:covseh}
Crossover (Cr) states that a line drawn from a point $\mathcal{S}$ in
a direction, if it cuts the polygon $\mathcal{P}$ odd number of times,
implies that $\mathcal{S}$ is inside $\mathcal{P}$, i.e. \par
\[Inside_{\mathcal{P}}^{CR}(\mathcal{S}) = \left\{ 
\begin{array}{l l}
  1, & \mbox{odd intersections}\\
  0, & \mbox{even intersections}\\ \end{array} \right. \]
For the case of vertices, the problem is solved by shifting the line
infinitesimally. Two issues arise when a line is shot from
$\mathcal{S}$ and it intersects a vertex. ($1$) There can be two
solutions, if the line is not shifted slightly. ($2$) If the crossover
has to be repeated several times until it finds an odd number of
intersections, then it is a nondeterministic problem, in case the line
is shot randomly. \par
By ($1$) ambiguity arises on the way a ray or line is shot from
$\mathcal{S}$ and by ($2$) nondeterminism arises due to repeatedness
because of the line being shot randomly. The following figures will
illustrate these issues in detail. In contrast to the Cr, by checking
through theorems \ref{theo:case_1_a} and \ref{theo:case_2_a}, the EH
method can easily determine if $\mathcal{S}$ lies in $\mathcal{P}$ or
not, deterministically. This is because whichever way a line is
drawn through $\mathcal{S}$, if it cuts the polygon, then it will
dismember $\mathcal{P}$ into a finite number of countable chains. If
it doesn't cut the polygon and $\mathcal{S}$ is a vertex, then also
there exist atleast one chain that contains $\mathcal{S}$. Searching
for these valid chains and then locating which of those need to be
\emph{broken} is deterministic. \par
The figure \ref{fig:covseh} shows the different cases under
consideration for the comparison of CR and EH method for the same
point of investigation. In figure \ref{fig:covseh}.(A), if the a
horizontal line is drawn to the left of the $\mathcal{S}$, then it
intersects at two points U and V and if it drawn to the right, it
intersects at the point W. According to CR, when the line is drawn to
the right of the $\mathcal{S}$, then $\mathcal{S}$ is inside the
polygon. If the line is drawn to the left of $\mathcal{S}$, then the
point is outside the polygon. This is definitely a case of
ambiguity. Also, the outcome of the CR depends on the direction of the
ray that is shot from $\mathcal{S}$. This makes the outcome of the
test nondeterministic in the sense that it is not known which ray
would give the correct result, if the rays are shot randomly. \par
The EH method overcomes this problem by segmenting the polygon into
finitely countable chains. The searching for an affine combination
of valid chains that may contain $\mathcal{S}$ is deterministic as
there are only limited number of chains available for checking. Thus
the outcome of the EH method is final and deterministic. If two
perpendicular rays with its intersection point as $\mathcal{S}$ are
drawn at a different orientation, thus intersecting the polygon at
different places, even then by rotating the oriented axis and the polygon to
horizontal vertical frame, the solution remains the same. Thus
randomness of the rays do not affect the outcome of the point
inclusion test for $\mathcal{S}$. For part (A) in the figure
\ref{fig:covseh}, by CR $Inside_{\mathcal{P}}^{CR}(\mathcal{S}) =
(0,1)$ depending on the number of intersections that is $(2,1)$. By
EH method, $Inside_{\mathcal{P}}^{EH}(\mathcal{S}) = 1$ and
$Inacc_{\mathcal{P}}^{EH}(\mathcal{S}) = 3$ by theorem
\ref{theo:case_1_a}. It must be noted that the inaccessibility of the
point related to the polygon may change but the status of
$\mathcal{S}$ related to $\mathcal{P}$ captured by the definition of
$Inside$ will not change if the point is inside the polygon. \par
Similarly, for the part (B) and (C) in figure \ref{fig:covseh}, by CR
the $Inside_{\mathcal{P}}^{CR}(\mathcal{S}) = (1,0)$ depending on the
number of intersections based on the direction of the ray which is
$(1,2)$. Finally, in figure \ref{fig:covseh}.(D), for point
$\mathcal{S}_{1}$ four valid chains exist namely, (a) VW (b) XU (c)
U$\mathcal{S}_{2}$ and (d) $\mathcal{S}_{2}$V, none of which need to
be \emph{broken} or \emph{ignored}. Thus by theorem
\ref{theo:case_2_a} $Inside_{\mathcal{P}}^{EH}(\mathcal{S}_{1}) = 0$
and $Inacc_{\mathcal{P}}^{EH}(\mathcal{S}_{1}) = 2$. By CR the outcome
of the inclusion test changes, that is
$Inside_{\mathcal{P}}^{CR}(\mathcal{S}_{1}) = (0,1)$ depending on the
intersections obtained by the direction of the ray that is
$(2,3)$. For the point $\mathcal{S}_{2}$, two valid chains exist
namely (a) $\mathcal{S}_{2}$V and (b) VWXU$\mathcal{S}_{2}$. Thus by
theorem \ref{theo:case_1_a},
$Inside_{\mathcal{P}}^{EH}(\mathcal{S}_{2}) = 1$ and
$Inacc_{\mathcal{P}}^{EH}(\mathcal{S}_{2}) = 2$. By CR
$Inside_{\mathcal{S}_{2}}^{EH}(\mathcal{S}_{2}) = 0$ as the number of
intersections is $4$. \par
\section{Winding Number Rule vs EH}\label{sec:wnrvseh}
The winding number rule (WNR) states that the number of times one
loops round the point $\mathcal{S}$ before reaching the starting point
on polygon $\mathcal{P}$, decides the number of times whether
$\mathcal{S}$ lies inside $\mathcal{P}$ or not. Thus \par
\[Inside_{\mathcal{P}}^{WNR}(\mathcal{S}) = \left\{ 
\begin{array}{l l}
  \mathcal{N}, & \mbox{$\mathcal{N}$ loops around $\mathcal{S}$}\\
  0, & \mbox{zero loops around $\mathcal{S}$}\\ \end{array} \right. \]
An analogy of a prison wall shown in figure \ref{fig:wnrvseh} is taken into
account to the explain the differences. Figure \ref{fig:wnrvseh}.(A)
is the intial structure of the prison and then the final structure is
shown in the part (B) of the same figure. Initially, via the WNR,
$\mathcal{S}_{1}$ was lying outside and $\mathcal{S}_{2}$ inside the
prison wall. Same is the verdict by the new method. Next a portion of
the prison wall is extended and the final structure looks like that in
figure \ref{fig:wnrvseh}.(B) . Note that $\mathcal{S}_{1}$ and
$\mathcal{S}_{2}$ are still outside the new prison via the new
definition, as the areas in which $\mathcal{S}_{1}$ and
$\mathcal{S}_{2}$ lie, are not reachable from prison's
perspective. This is because two pairs of walls have to be
\emph{ignored and not broken} in each case. From this point of view
both $\mathcal{S}_{1}$ and $\mathcal{S}_{2}$ are outside
$\mathcal{P}$, in figure \ref{fig:wnrvseh}.(B). Also, even though WNR
= 2 for $\mathcal{S}_{2}$ in new prison in part (B) of the same
figure, implies the point lies twice inside, it does not make
sense. It can be stated that if a point lies inside once, then it lies
forever. There does not arise the idea of point lying inside
$\mathcal{N}$ times. Thus a point lying inside $\mathcal{N}$ times, is
the same as point lying inside once. If it does not lie inside, then
it won't lie forever. In this way the new definitions and the
accompanying theorem are definitive in producing a concrete answer via
means of epigraph-hypograph method. \par
\begin{figure}
\centering
\subfloat[CR and EH ]{\label{fig:covseh}\includegraphics[width=.3\textwidth]{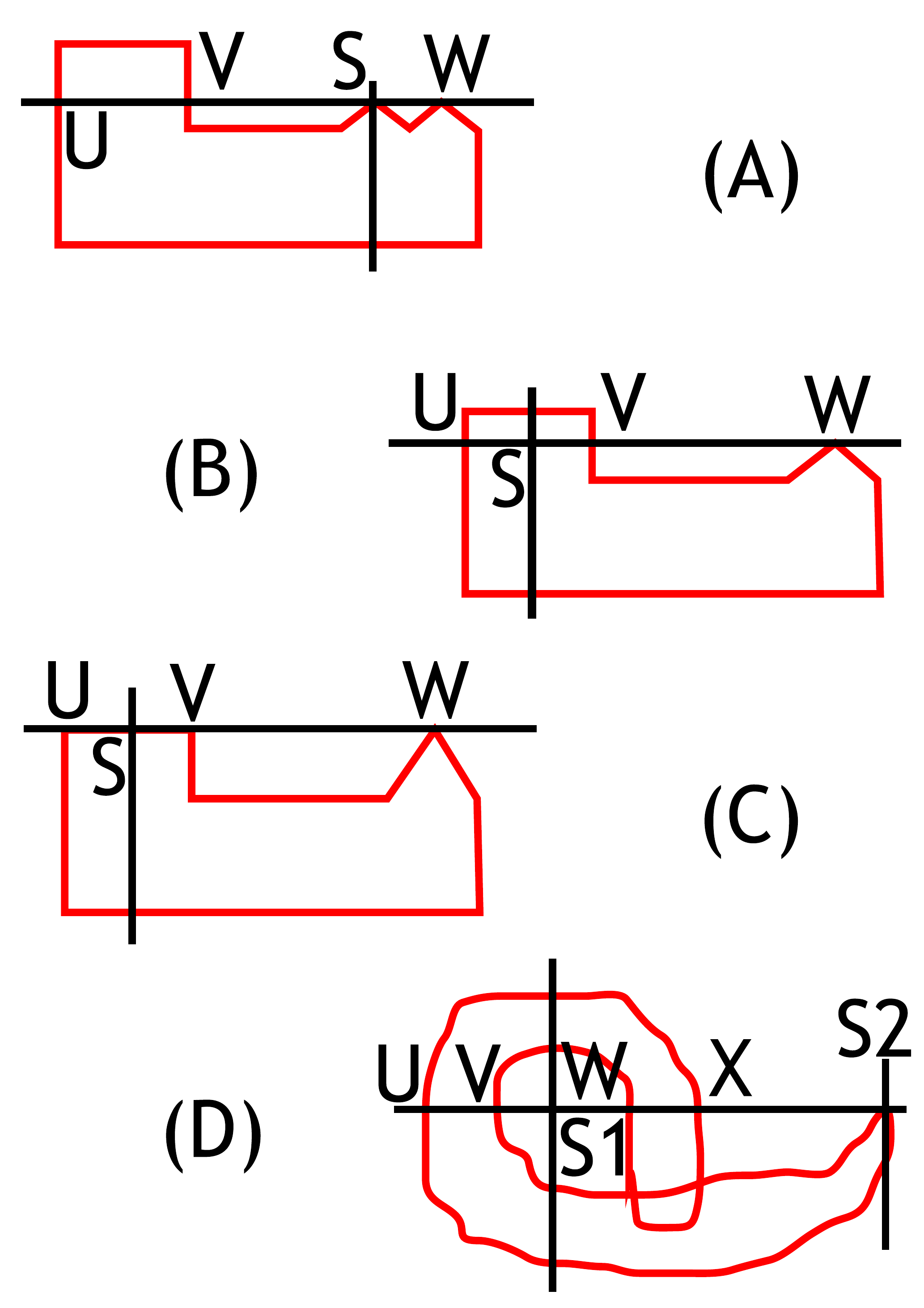}}
\subfloat[WNR and EH]{\label{fig:wnrvseh}\includegraphics[width=.3\textwidth]{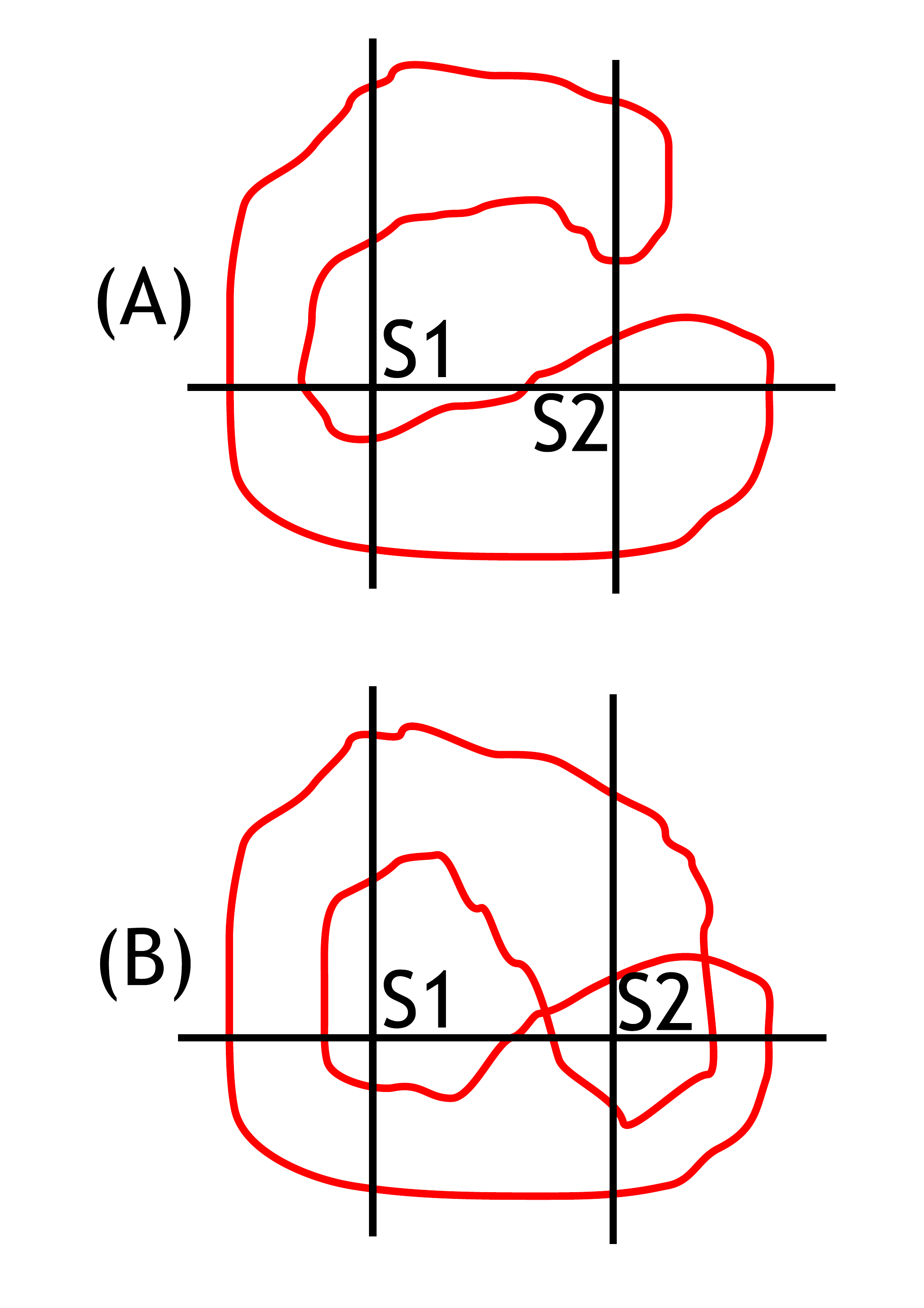}}
\caption{Cases to compare Crossover (CR), Winding Number Rule (WNR)
  and the proposed method (EH).}
\label{fig:crvswnrvseh}
\end{figure}
\section{Conclusion}\label{sec:conclusion}
A theoretically reliable and an analytically correct solution to the point in
polygon problem is proposed. The proof for the same is given by building
a relationship between two new concepts of inaccessibility as well as
inside. It is proved that the solution is good interpretation of
inside for both the simple and self intersecting polygons, which forms
the crux of the manuscript. \par 
%

\bibliographystyle{ieeetr}

\newpage
\section*{Appendix}

\section{Algorithm Implementation}\label{sec:alg_impl}
The implemented algorithm will be illustrated with each step explained
with a pictorial representation of result of the execution of that
particular step. Examples include closed, intersecting and non
intersecting polygons. Figures \ref{fig:in} and \ref{fig:out} show the
polygons with the sample point being tested at different locations. \par
\begin{figure}[!t]
\centering
\includegraphics[width=8.5cm,height=10cm]{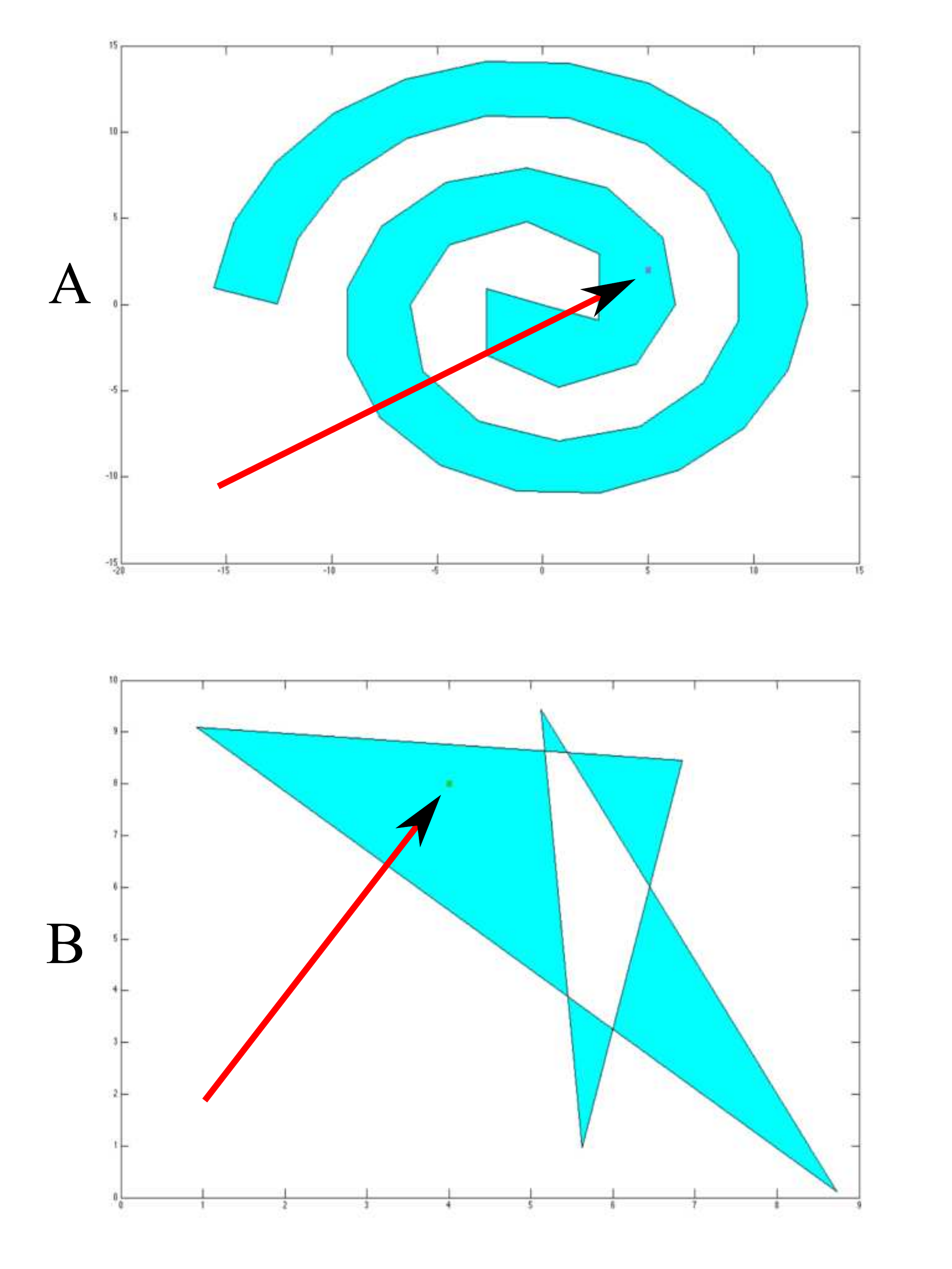}
\caption{Query point location in (A) closed and (B) intersecting polygon.}
\label{fig:in}
\end{figure}
\begin{figure}[!t]
\centering
\includegraphics[width=8.5cm,height=10cm]{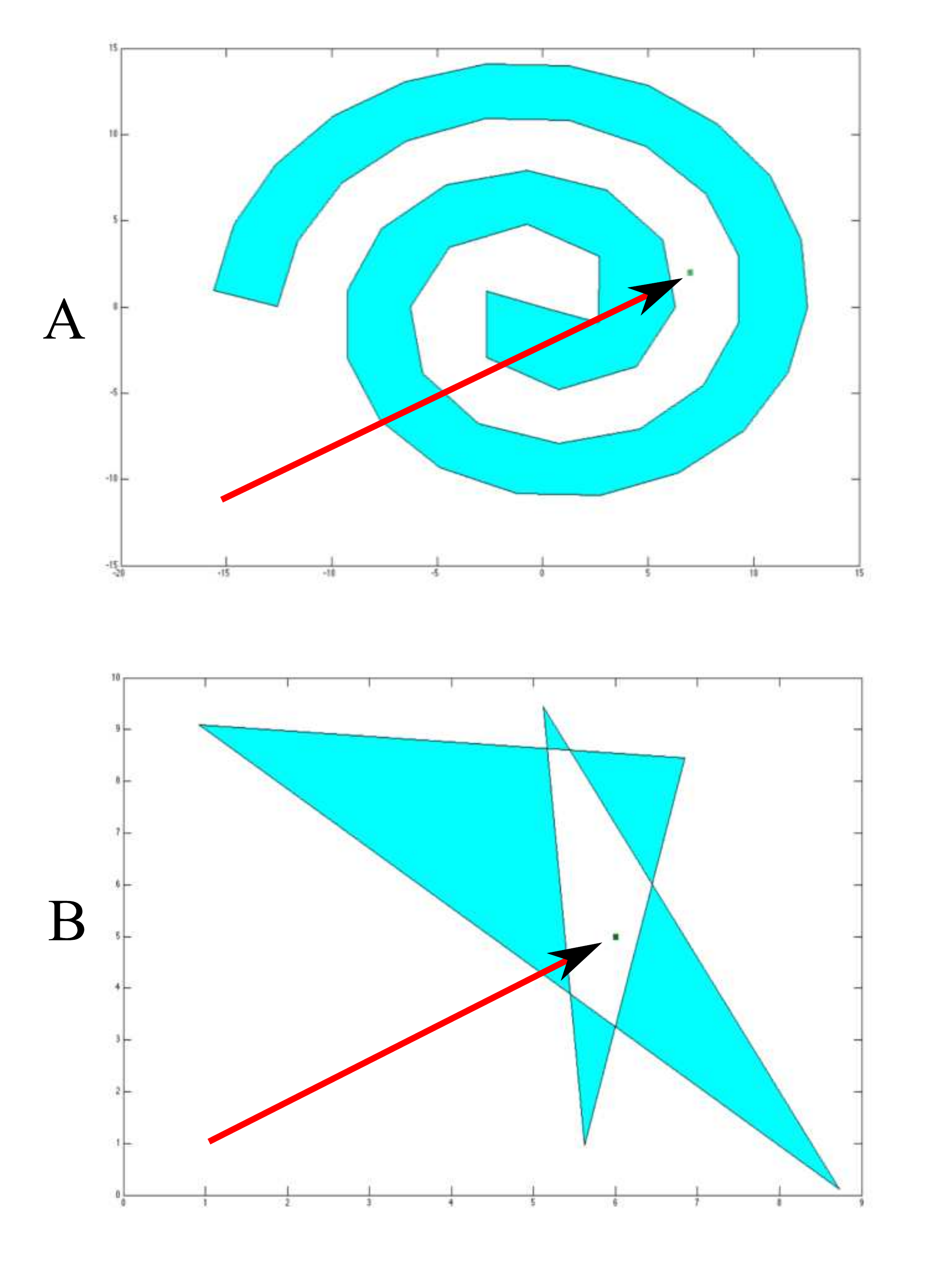}
\caption{Query point location in (A) closed and (B) intersecting polygon.}
\label{fig:out}
\end{figure}
\subsection{Intersecting the $\mathcal{P}$ }\label{sec:intersect_poly}
$\mathcal{P}$ is an ordered series of vertices $(x_{1},y_{1})$,
$(x_{2},y_{2})$,..., $(x_{n},y_{n})$, starting from
$(x_{1},y_{1})$ that defines the polygon. Given $\mathcal{P}$, the
first step is to draw a straight line through the sample point
$\mathcal{S}$ $(x_{0},y_{0})$, such that it intersects the polygon at
certain points. For simplicity, the horizontal line $y = y_{0}$ is
considered without loss of generality. \par 
The intersection points are obtained via computing the values of the
$x_{int}$ coordinates between the $y = y_{0}$ and the straight line,
extending from $(x_{i},y_{i})$ to $(x_{i+1},y_{i+1})$. The slope and
the constant of the former is $0$ and $y_{0}$, and that of latter is
$m_{i} = \frac{y_{i+1} - y_{i}}{x_{i+1} - x_{i}}$ and $c_{i} = y_{i} -
m_{i} x_{i}$. Here $i$ and $i+1$ are consecutive points on
$\mathcal{P}$. Solving the algebraic equation between two straight
line gives: \par
\begin{equation}
x^{int} = \frac{y_{0} - c_{i}}{m_{i}}
\label{eq:x_int}
\end{equation}
Once an intersecting point with coordinate $(x^{int},y_{0})$ is
obtained, the algorithm checks if this point lies in between
$(x_{i},y_{i})$ and $(x_{i+1},y_{i+1})$, \emph{on the line}. This is
achieved via affine combination property in definition
\ref{def:affinecomb}.  Three different cases arise depending on the
slope of the line: \par
\begin{enumerate}
\item $m_{i} = \pm \infty$ : If the edge is a vertical line, the point
  $(x^{int},y_{0})$ lies on the line. This is because $x^{int} = x_{i}
  = x_{i+1}$.
\item $m_{i} = 0$ : If edge is a horizontal line, the intersection point
  $(x^{int},y_{0})$ is considered to lie outside the range of $(x_{i},y_{0})$
  and $(x_{i+1},y_{0})$. This is because, if it lies within
  $(x_{i},y_{0})$ and $(x_{i+1},y_{0})$, then there are infinitely
  many points that could be considered. To save from randomly
  selecting any point within the given range, it is thus considered
  that the point $(x^{int},y_{0})$ outside the range. Also, if $(x_{0},
  y_{0})$ lie on a horizontal edge, it is still considered outside the
  range for further processing.
\item $m_{i} \in \mathcal{R}-\{0,\pm \infty\}$ : Finally, this being
  the simplest case, it is easy to compute whether $(x^{int},y_{0})$
  lie on the line between the given points using definition
  \ref{def:affinecomb}.
\end{enumerate}
Figures \ref{fig:step_1_in} and \ref{fig:step_1_out} show the
horizontal line $y = y_{0}$ passing through $\mathcal{S}$ and
intersecting the polygon at different edges. The sample point is
indicated via the red arrow while the vertices at the intersection
of $y = y_{0}$ with edges of the polygon are pointed to by the blue
arrows. Once the vertices of the intersecting points are computed,
they are added to the list of pre existing vertices of the
polygon under consideration. The newly created intersecting vertices
are appended in such a manner that the traversal order is not
affected. \par
It may happen that some of the newly added vertices with
their $y$ coordinate as $y_{0}$ are very close to other pre existing
vertices in the tolerance range of $\pm 10^{-5}$ or smaller. The
algorithm removes the vertices from the list that lie in such close range, but
keeps the newly added vertices with coordinates
$(x^{int},y_{0})$. Two reasons arise for executing this step:\par 
\begin{enumerate}
\item To avoid further computations that may involve floating point
  precision of the order smaller than or equal to $\pm 10^{-5}$.
\item To retain newly appended vertices with coordinates
  $(x^{int},y_{0})$ that will be later used for searching chains
  whose epigraph or hypograph may contain $\mathcal{S}$. 
\end{enumerate}
It must be noted that this insertion of vertices does not
affect the solution to the PiP on paper, but may affect solution
practically on a computer due to the floating point representation and
operations on it. Also, note that the tolerance range is not the
issue investigated in this work. \par
\begin{figure}[!t]
\centering
\includegraphics[width=8.5cm,height=10cm]{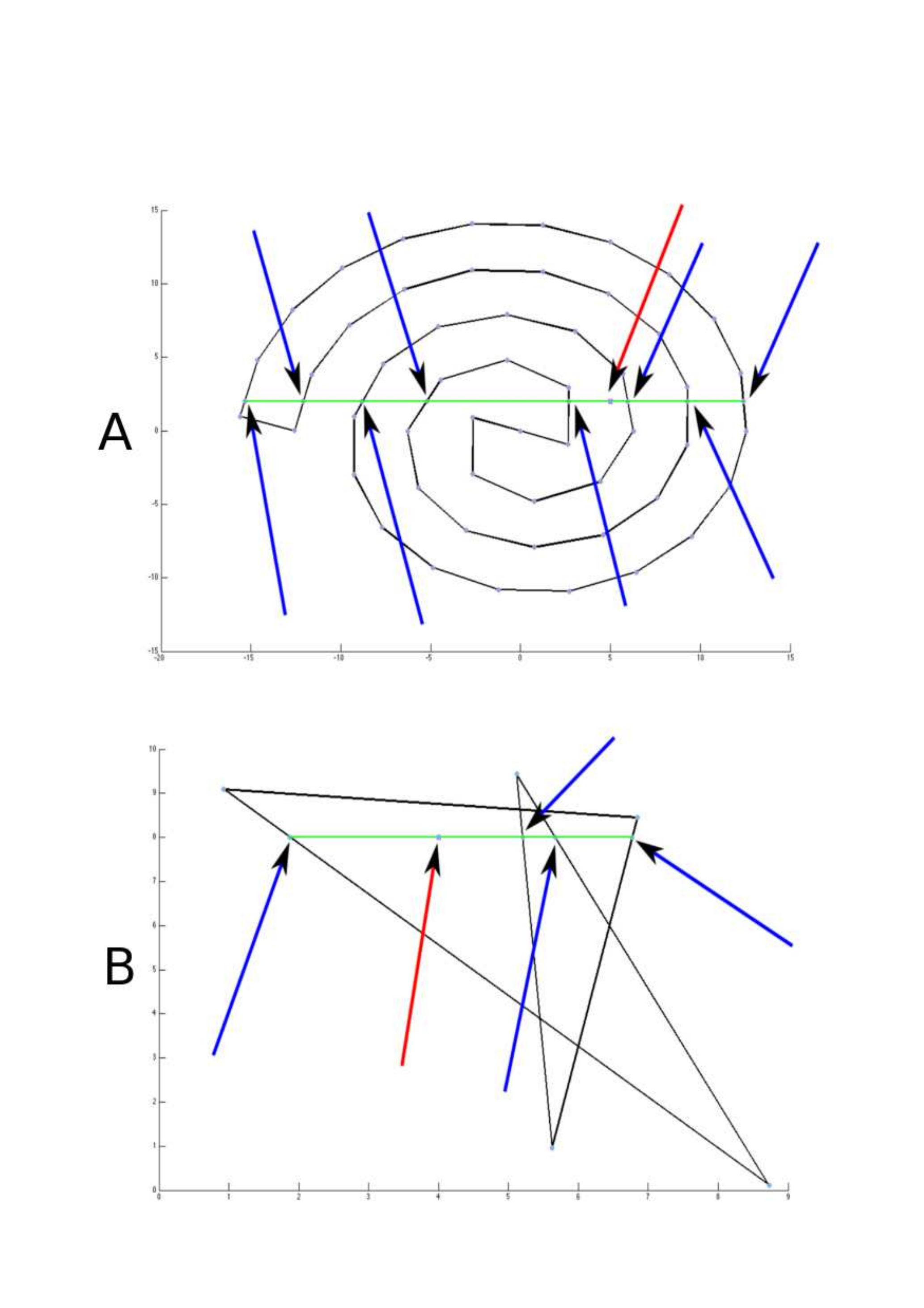}
\caption{Finding interesecting points on the polygon. Sample point
  indicated via red arrow lies in (A) closed and (B) intersecting
  polygon. Blue arrows point to newly found intersection points. The
  green line depicts $y = y_{0}$.} 
\label{fig:step_1_in}
\end{figure}
\begin{figure}[!t]
\centering
\includegraphics[width=8.5cm,height=10cm]{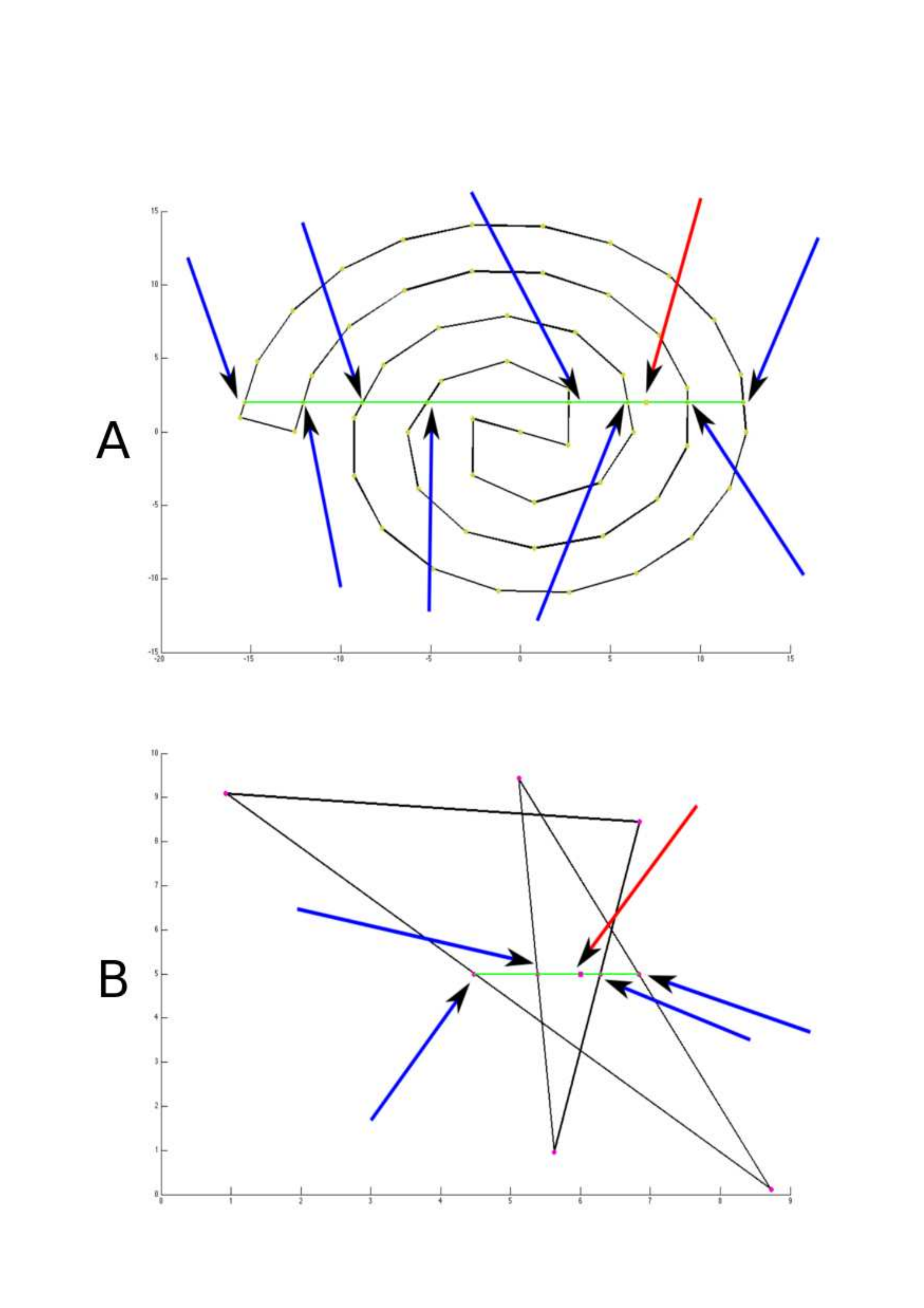}
\caption{Finding intersecting points on the polygon. Sample point
  indicated via red arrow lies out of (A) closed and (B) intersecting
  polygon. Blue arrows point to the newly found intersection
  points. The green line depicts $y = y_{0}$.}
\label{fig:step_1_out}
\end{figure}
%
\subsection{Decomposition of Polygon into Valid and Invalid
  Chains}\label{sec:decompose}
The new vertices with coordinates
$(x_{j}^{int},y_{0})$ were ($j \in \{1,...,m\}$) and the sample point
$(x_{0},y_{0})$ become form the basis for the next steps. From
definition \ref{def:epigraph} it is known that a point belongs to the
epigraph (hypograph) if it lies on or above (below) the graph under
consideration. \par
To use the mentioned properties, the polygon $\mathcal{P}$ needs to be
decomposed into the chains. These chains would then be tested
for convexity or concavity. The algorithm achieves this in the
following way. A vertex on $\mathcal{P}$, especially with the $y$
coordinate being $y_{0}$ is picked up as the starting vertex. A traversal
order is chosen randomly and is followed until the starting point is
reached again. \par
Vertices lying on a path between any two consecutive $(x_{i}^{int},y_{0})$
and $(x_{j}^{int},y_{0})$ points on $\mathcal{P}$, including the
intersecting points themselves, form a chain. Thus as the traversal
is done from one intersecting vertex to the other, the polygon gets
decomposed into chains, that lie either above or below the
horizontal line $y = y_{0}$. \par
Next the chains are tested for validity using the definitions in
\ref{def:epigraph} and \ref{def:hypograph}. In
non mathematical terms, if the starting and ending vertices of a
chain are on either side of the $\mathcal{S}$ on the line $y =
y_{0}$ and not on the same side with respect to the  sample point,
then the chain is a valid one. The valid chains are stored with
their starting and ending vertices along with coordinates of points on
the chain. \par
Figures \ref{fig:step_2_in} and \ref{fig:step_2_out} show the valid
and invalid chains pointed by the green and blue arrows
respectively. The solid and dotted lines also demarcate the
chains. The solid lines represent the valid chains and the dotted
lines represent the invalid chains. The red arrow indicates the
sample point's location in each of the figures. \par
\begin{figure}[!t]
\centering
\includegraphics[width=8.5cm,height=10cm]{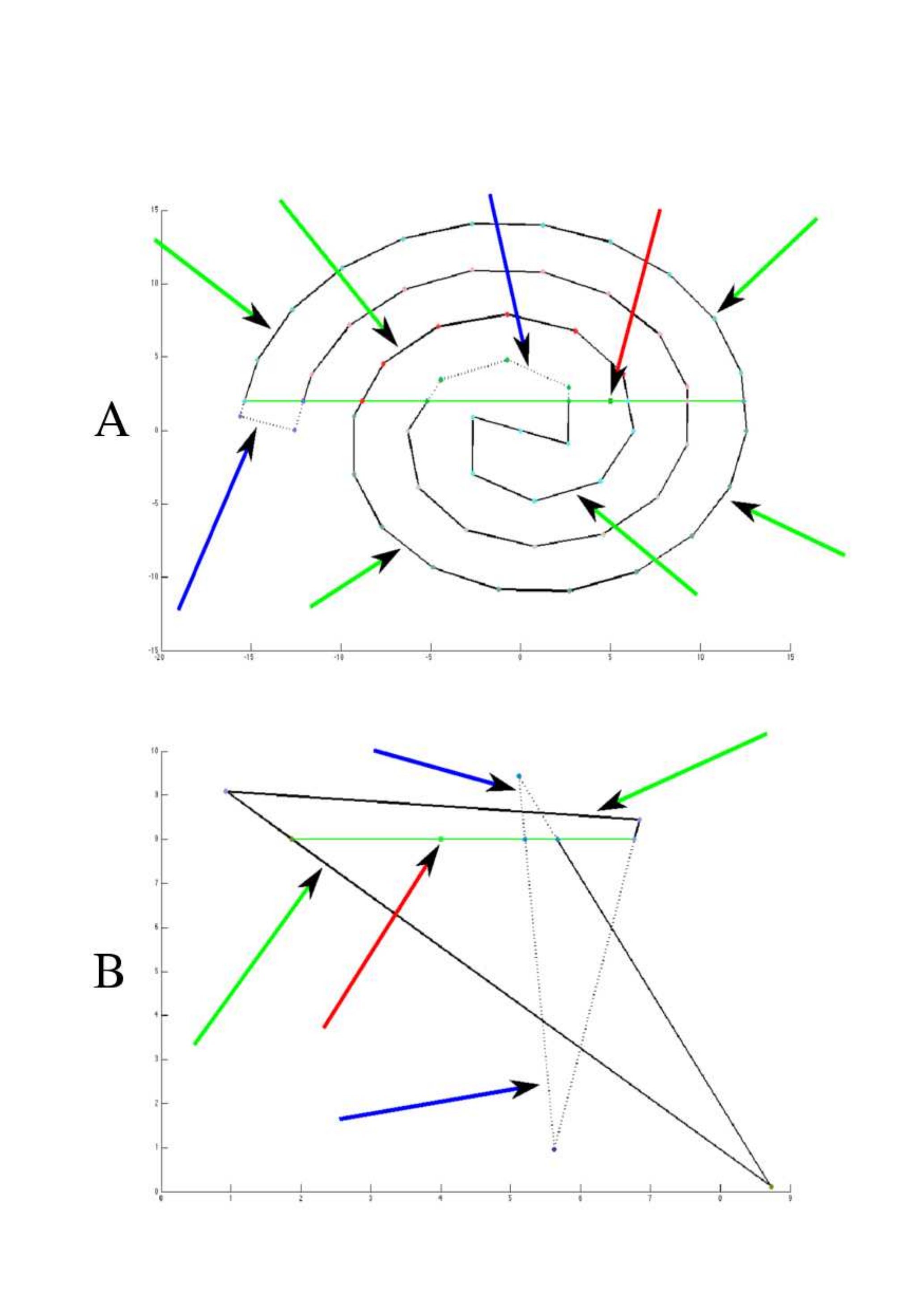}
\caption{Polygon decomposition into chains. (A) closed and (B)
  intersecting polygon decomposed into chains. Green and blue arrows
  indicate the valid and invalid chains, respectively. Location of
  query point w.r.t the polygon is shown via the red arrow.}
\label{fig:step_2_in}
\end{figure}
\begin{figure}[!t]
\centering
\includegraphics[width=8.5cm,height=10cm]{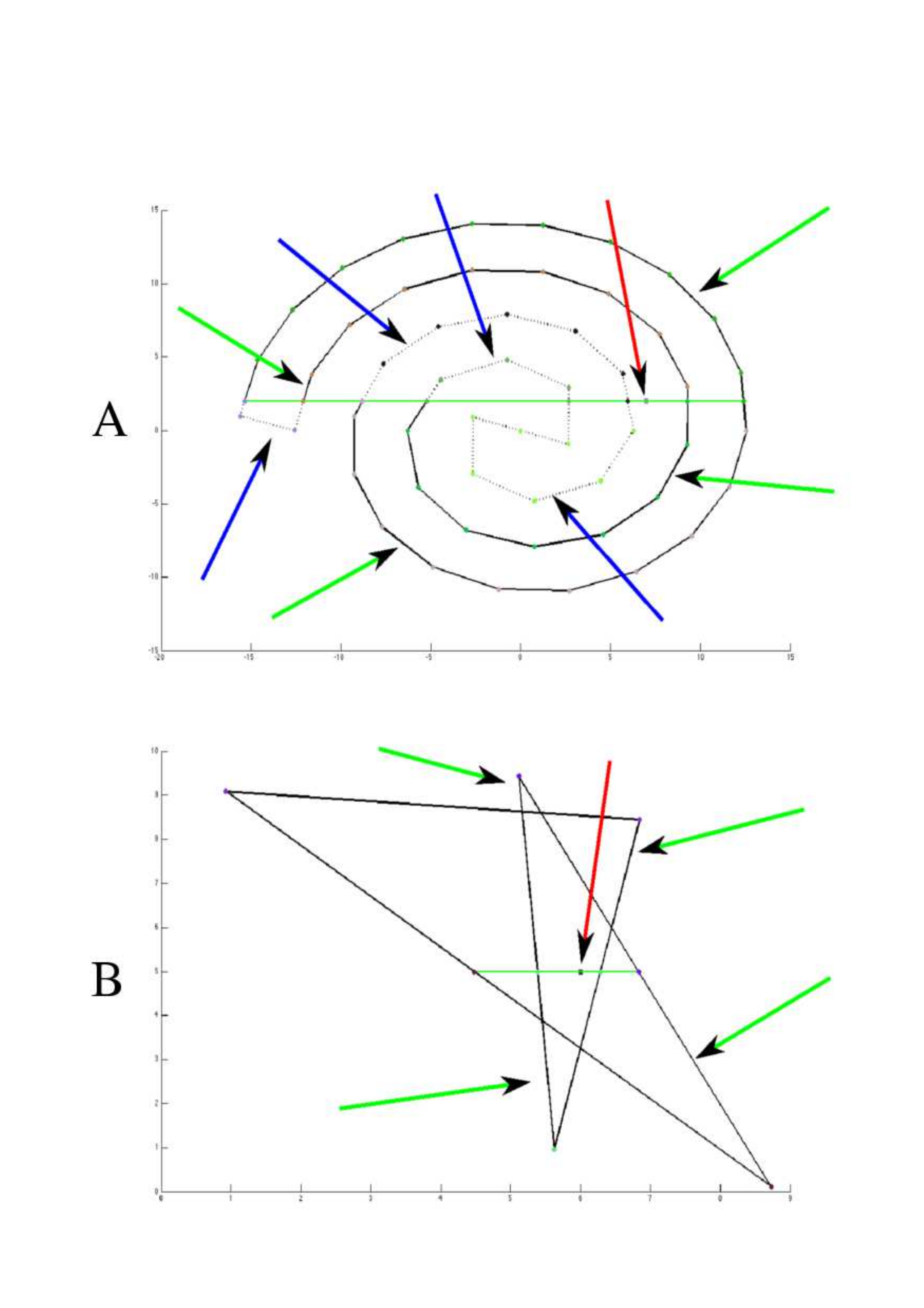}
\caption{Polygon decomposition into chains. (A) closed and (B)
  intersecting polygon decomposed into chains. Green and blue arrows
  indicate the valid and invalid chains, respectively. Location of
  query point w.r.t the polygon and is shown via the red arrow.}
\label{fig:step_2_out}
\end{figure}
%
\subsection{Chain Intersection}\label{sec:chain_intersection}
Hitherto, it is known that the $x$ coordinate of $\mathcal{S}$ lies in
the epigraph or the hypograph of the valid chains. To know that the sample
point lies within the polygon, finally it needs to be tested whether
the $y$ coordinate of $\mathcal{S}$ lies within any two nearest valid
chains. The rationale behind doing these steps will be elucidated a
little later, but before that it is important to define what the
nearest valid chains mean: \par
After the valid chains have been obtained, a similar procedure of
intersecting the chains is done using the vertical line $x =
x_{0}$. For each chain, defined by a set of vertices $(x_{i},y_{i})$,
points of intersection are computed between the edges of the chain and
the vertical line. The slope of the straight line joining two vertices
on the chain is $m_{i} = \frac{y_{i+1} - y_{i}}{x_{i+1} - x_{i}}$ and
the constant is $c_{i} = y_{i} - m_{i} x_{i}$. Now, $i$ and $i+1$ are
consecutive points on the valid chain. The value of $y$ coordinate of
the point of intersection are computed as follows: \par 
\begin{equation}
y^{int} = m_{i} x_{0} + c_{i}
\label{eq:y_int}
\end{equation}
Once the coordinate of the intersection point $(x_{0},y^{int})$ is
obtained, a test is conducted to find whether the intersection point
lies in between $(x_{i},y_{i})$ and $(x_{i+1},y_{i+1})$. This is
achieved using the affine combination property in definition
\ref{def:affinecomb}. Three cases may arise, that need to be considered: \par
\begin{enumerate}
\item $m_{i} = \pm \infty$ : If edge is a vertical line, the
  intersecting point $(x_{0},y^{int})$ is considered to lie inside the
  range of $(x_{0},y_{i})$ and $(x_{0},y_{i+1})$. This is because, if
  the chain crosses $x = x_{0}$ many times before going from left of
  $\mathcal{S}$ to the right of $\mathcal{S}$ or vice versa, then
  there can be infinitely many points on one chain that may be
  considered as intersection points. That is, it may require the
  algorithm to store many intersection points for just one chain. To
  avoid the presence of many intersection points on a single chain,
  the algorithm stores the vertex of one of the vertical edges in a
  chain.
\item $m_{i} = 0$ : If the edge is a horizontal line, the intersecting
  point $(x_{0},y^{int})$ lies on the line. This is because $x^{int} =
  x_{0}$ and $y^{int} = y_{i} = y_{i+1}$. 
\item $m_{i} \in \mathcal{R}-\{0,\pm \infty\}$ : Finally, this being
  the simplest case, it is easy to compute whether $(x_{0},y^{int})$
  lie on the line between the given points using definition
  \ref{def:affinecomb}. 
\end{enumerate}
The pictorial representation of the line $x = x_{0}$ intersecting the
valid chains are shown in figures \ref{fig:step_3_in} and
\ref{fig:step_3_out}. The green arrows indicate the points of
intersection on the valid chains that are now pointed to by the blue
arrows. The sample point is indicated via the red arrow. As mentioned
earlier, the invalid dotted chains have been removed by the algorithm,
in its final stage of processing. \par
%
\subsection{Point Inclusion Test}\label{sec:inclusion_test}
Each of the valid chains have a point of intersection with $x =
x_{0}$. Next, the algorithm sorts the chains according to the
$y^{int}$. This is done so as to arrange all the intersecting points
$(x_{0},y^{int})$ in the different chains in an ascending order, with
the sample point $(x_{0},y_{0})$ somewhere in between (if the case
is). \par
Finally, considering a pair of $y^{int}$'s at a time, i.e a pair of
chains at a time, it is tested whether $(x_{0}y_{0})$ is an affine
combination of $(x_{0},y_{i}^{int})$ and $(x_{0},y_{j}^{int})$ (for $i
< j $). If such a pair is found, then the point lies inside the
polygon, else it is outside. The only care that the algorithm takes
while executing this step is that, the pair of chains or the pair of
intersecting $y$ coordinates are mutually exclusive. This means that
if there are $n/2$ pairs, with $n$ being an even number and $y_{1}^{int}$, $y_{2}^{int}$, ...,
$y_{n}^{int}$ are the intersecting $y$ values in order, then
$(y_{1}^{int},y_{2}^{int})$, $(y_{3}^{int},y_{4}^{int})$, ...,
$(y_{n-1}^{int},y_{n}^{int})$ are mutually exclusive in the sense that
element of one pair cannot be included in any other pair. \par
The fundamental idea behind such a rule is that, in a pair, if the path of
traversal in a chain is moving from left side of $\mathcal{S}$ to
the right side (say), then the traversal in the other chain must
move from right side $\mathcal{S}$ to the left. Thus, any pair shall
not contain chains from any other pair. \par 
These ideas can be seen in figures \ref{fig:step_3_in} and
\ref{fig:step_3_out}. The green arrows mark the intersecting points on
the valid chains. The test of affine combination with respect to the sample
point $\mathcal{S}$, indicates whether the point lies within the
polygon or without. \par
\begin{figure}[!t]
\centering
\includegraphics[width=8.5cm,height=10cm]{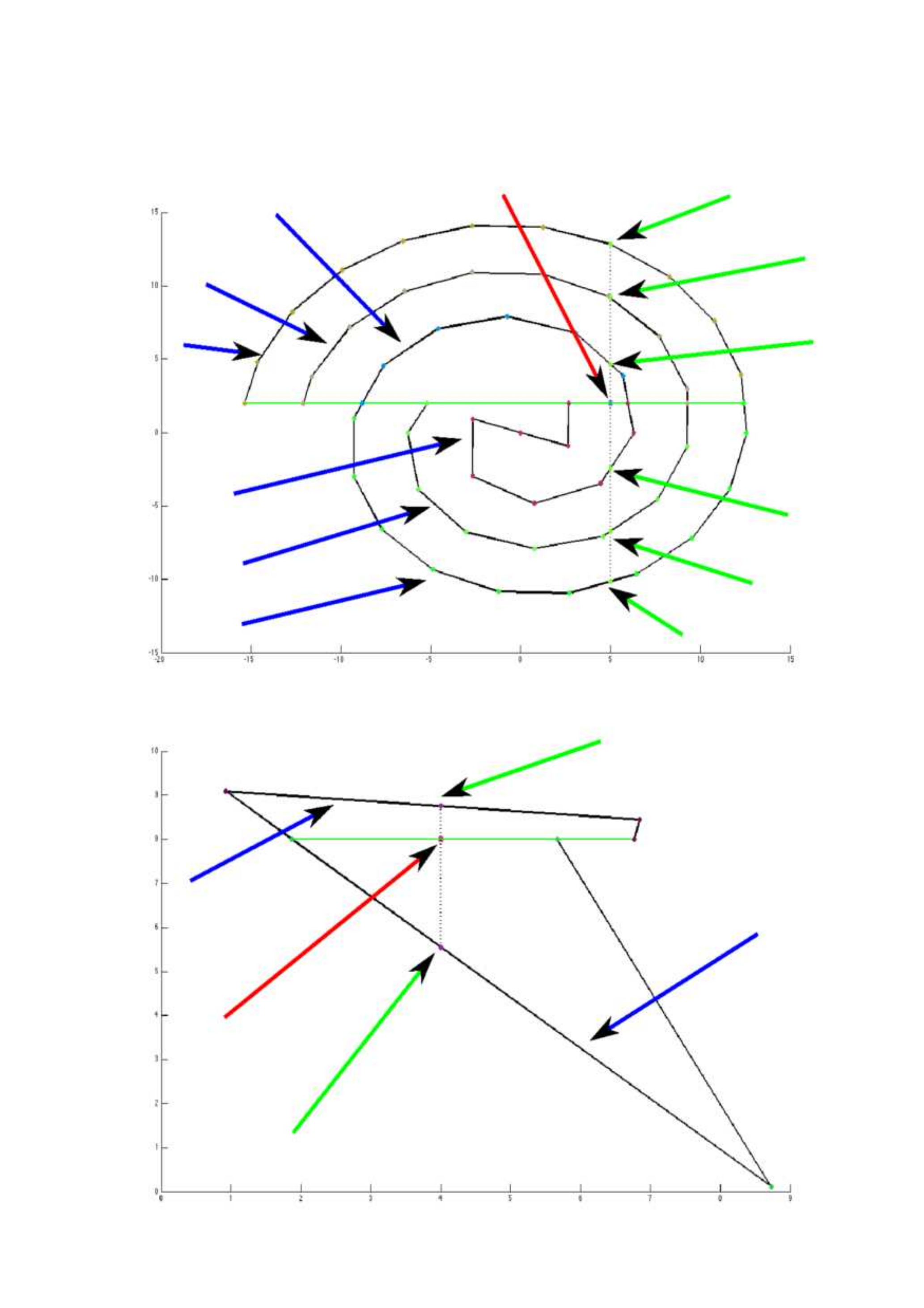}
\caption{Point inclusion test. Valid chains of (A) closed and (B)
  intersecting polygon indicated by blue arrows. Green arrows depict
  the newly found intersection points with the vertical line $x =
  x_{0}$. The red arrow shows the point lies inside the polygon.}
\label{fig:step_3_in}
\end{figure}
\begin{figure}[!t]
\centering
\includegraphics[width=8.5cm,height=10cm]{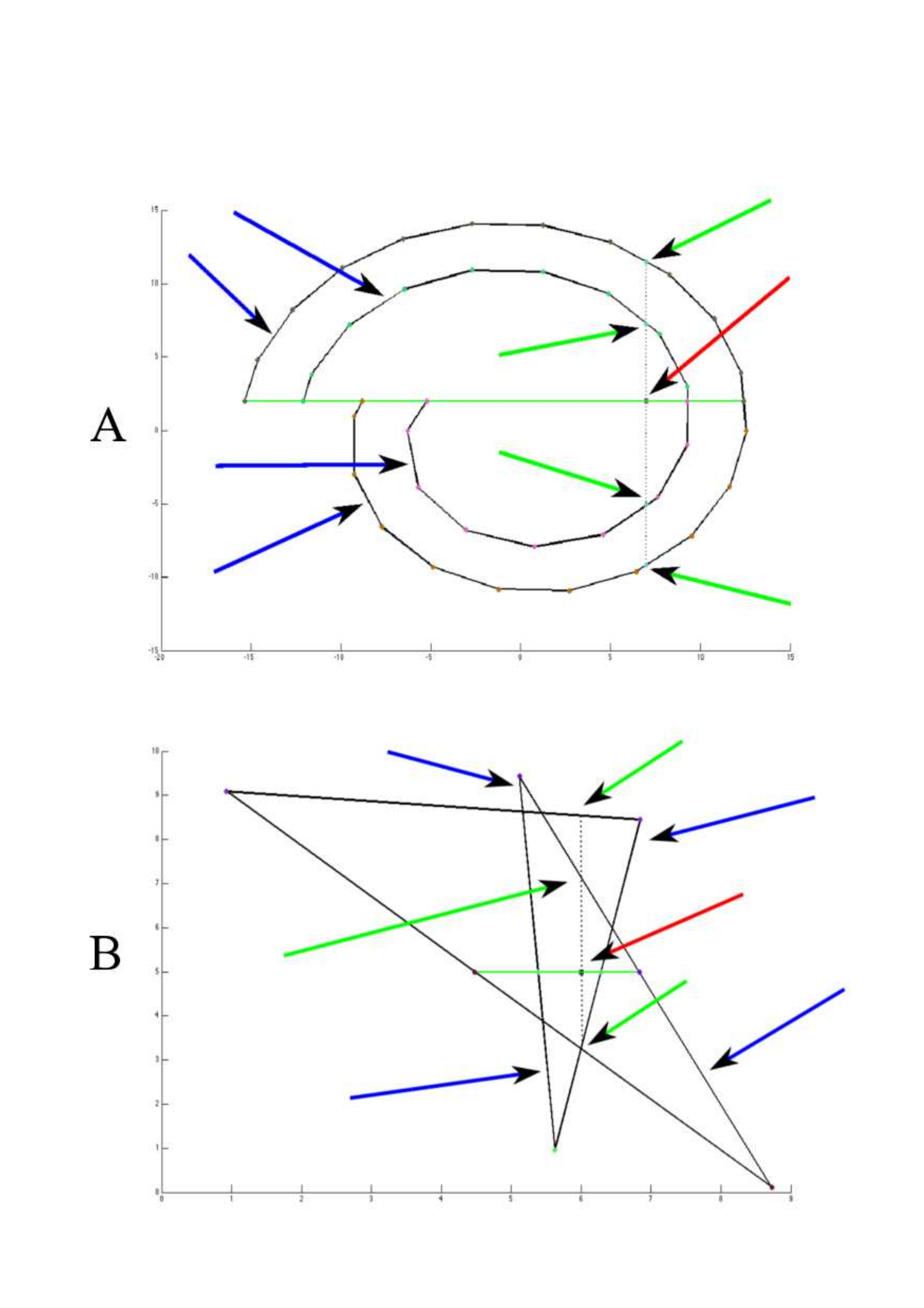}
\caption{Point inclusion test. Valid chains of (A) closed and (B)
  intersecting polygon indicated by blue and green arrows,
  respectively. Green arrows also depict the newly found intersection
  points with the vertical line $x = x_{0}$. The red arrow shows the
  point lies outside the polygon.}
\label{fig:step_3_out}
\end{figure}
%
\section{Time Complexity}\label{sec:time_complexity}
The computational complexity of the algorithm in terms of time needs
to be addressed. With $n$ edges, the algorithm takes a constant time
of $C_{1}$ to process the first step mentioned in section
\ref{sec:intersect_poly}. $C_{1}$ is the constant number of steps used
in computing the intersection of the horizontal line with each of the
edges and deciding whether the intersection point lies on the edge,
between the vertices of the edge. This process thus takes up $C_{1}n$ steps. The
newly found intersection points are added to the vertex list of the
polygon while some vertices close to the intersecting points in the
range of $\pm 10^{-5}$ or less are removed to address the floating point
precision. This execution step on the whole increases the number of
vertices by a fraction, say $f_{1}$ were $0 \leq f_{1} < 1$. \par
Thus the total number of vertices on the polygon now amount to $(1 +
f_{1})n$. The decomposition of the polygon into different chains
requires the processing of all $(1 + f_{1})n$ edges, as has been
described in section \ref{sec:decompose}. This processing is of the
order of $(1 + f_{1})n$ with some constant $C_{2}$ required for
checking if the chains are valid or not. \par
In section \ref{sec:chain_intersection}, as in section
\ref{sec:intersect_poly} the points of intersection are computed for
the vertical line and one of the edges in each of the chains. Since the
number of edges belonging to the chains is a fraction $f_{2}$ ($0
\leq f_{2} < 1$) of the total number of edges of the polygon, the
execution of this step requires $C_{1}f_{2}(1 + f_{1})n$ units of cpu
time. Finally, after sorting of the chains which requires $m log(m)$
where $m = f_{2}(1 + f_{1})n$ and checking whether the sample point is
an affine combination, requires a constant time of $C_{3}$. This
procedure is mentioned in the previous section. \par
Summing up the total time of execution, the algorithm works in
$(C_{1}n + (1 + f_{1})n + C_{2} + C_{1}f_{2}(1 + f_{1})n) +
C_{3} + m log(m) = (C_{1} + 1 + f_{1} + C_{1}f_{2} +
C_{1}f_{2}f_{1})n + C_{2} + C_{3} + m log(m)$. Let $A = (C_{1} + 1 + f_{1} + C_{1}f_{2} +
C_{1}f_{2}f_{1})$ and $B = C_{2} + C_{3}$, then the computational time
complexity of the algorithm is $An + B + m log(m) \rightarrow \mathcal{O}(n
log n)$. This time in computation is the worst case scenario, where
all the edges have to be processed in for all the four main steps
described above. \par 
\begin{figure}[!t]
\centering
\includegraphics[width=8.5cm,height=5cm]{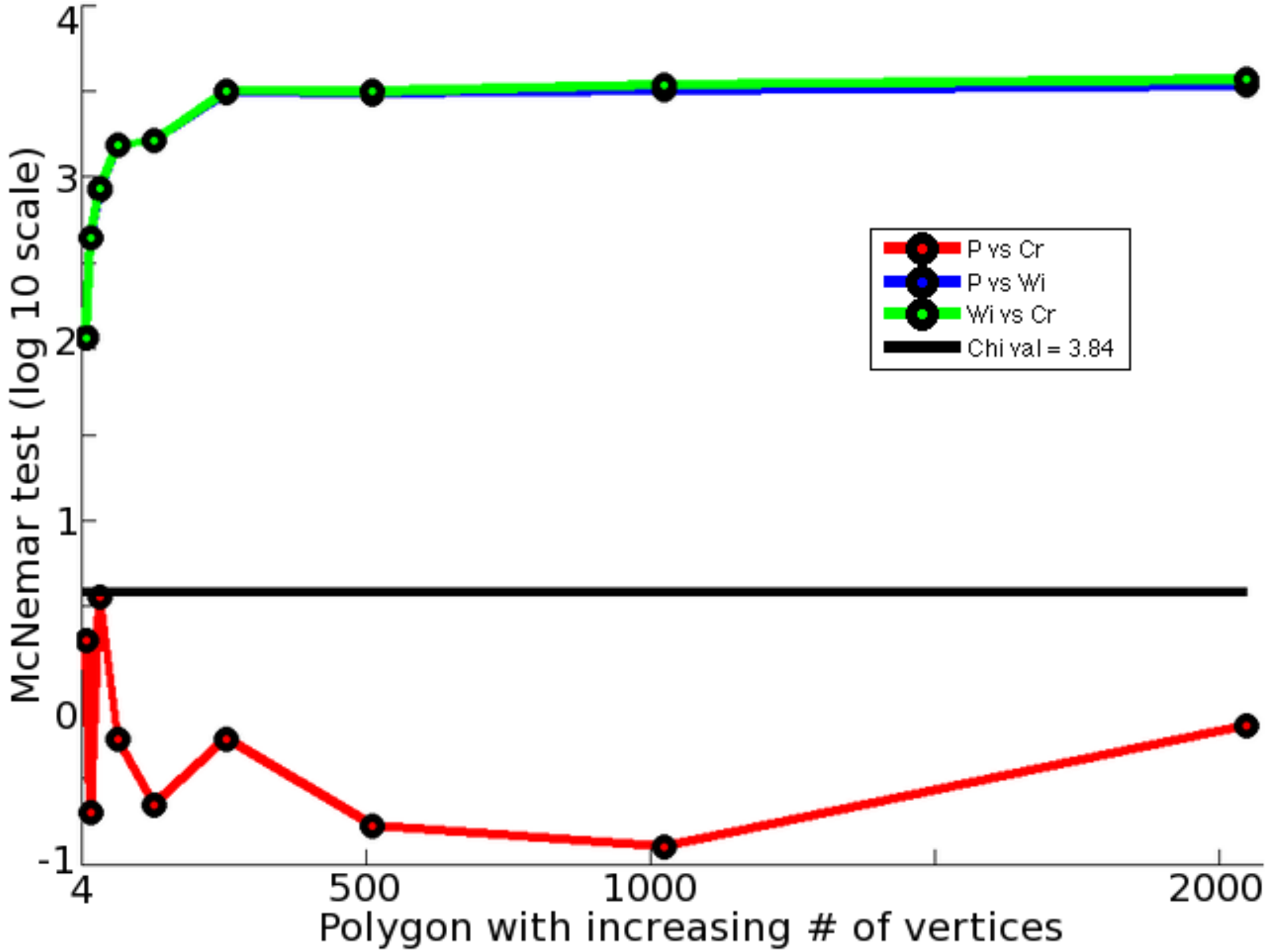}
\caption{Check of statistical significance using McNemar's Test for
  polygons with increasing vertices. Red, blue and green lines indicate the
  significance of P vs Cr, P vs Wi and Wi vs Cr, respectively. The
  black line represents the standard $\chi^{2}$ value of $3.85$ with a
  $p$ value of $0.05$. The vertical axis is the $log_{10}$
  representation of the test values.}
\label{fig:McNemartest}
\end{figure}
\section{Results}\label{sec:results}
With the proof that the proposed algorithm gives theoretically
reliable results, it would of interest to know how algorithms based on
the two widely used crossing over and winding number rule concepts,
fair. One of the measures of fairness is the test of significance of
results obtained from the algorithms. \par
Artificial polygons with number of vertices in the set $2^{i}$ where
$i \in \{2,...,11\}$ were generated. For
a particular vertex number, $10000$ polygons were generated and
stored. Each polygon was generated by randomly generating the
coordinates and joining the vertices in order. The last vertex was
finally joined with the first vertex. This created a series of
intersecting as well as non intersecting polygons. All polygons were
generated in the bounding box of $[0,1]$. \par
For each of the $10000$ polygons having a particular number of
vertices, a random test point was generated in the bounding
box. McNemar's Test \cite{McNemar:1947} was employed to check the
statistical significance of one algorithm against another. In a
$2\times2$ contingency table, the McNemar's test gives a statistic
similar to the chi squared statistic which is formulated as: \par
\begin{figure}[!t]
\centering
\includegraphics[width=8.5cm,height=5cm]{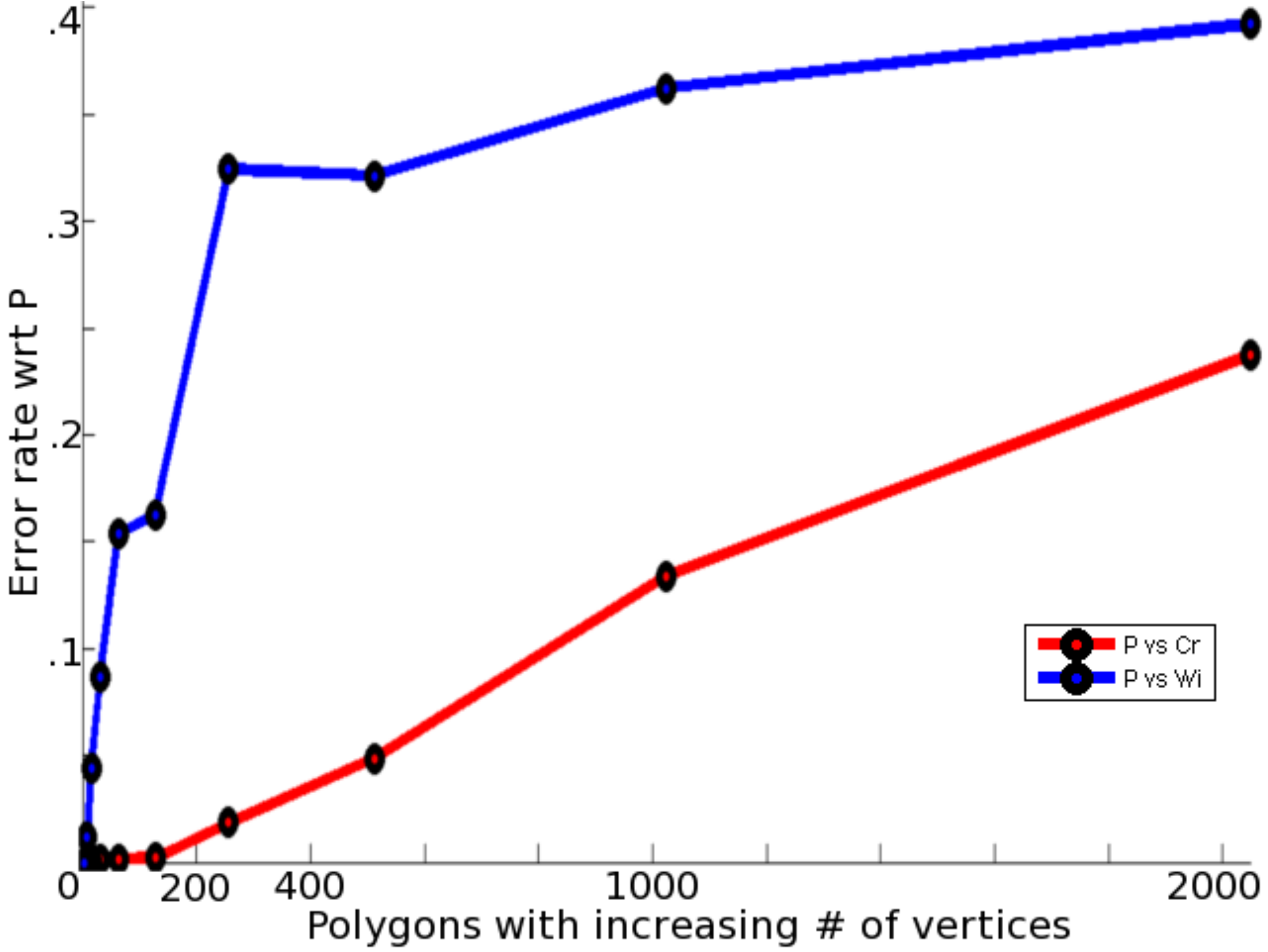}
\caption{Error rates of Cr and Wi with respect to P.}
\label{fig:Errwrtp}
\end{figure}
\begin{figure}[!t]
\centering
\includegraphics[width=8.5cm,height=5cm]{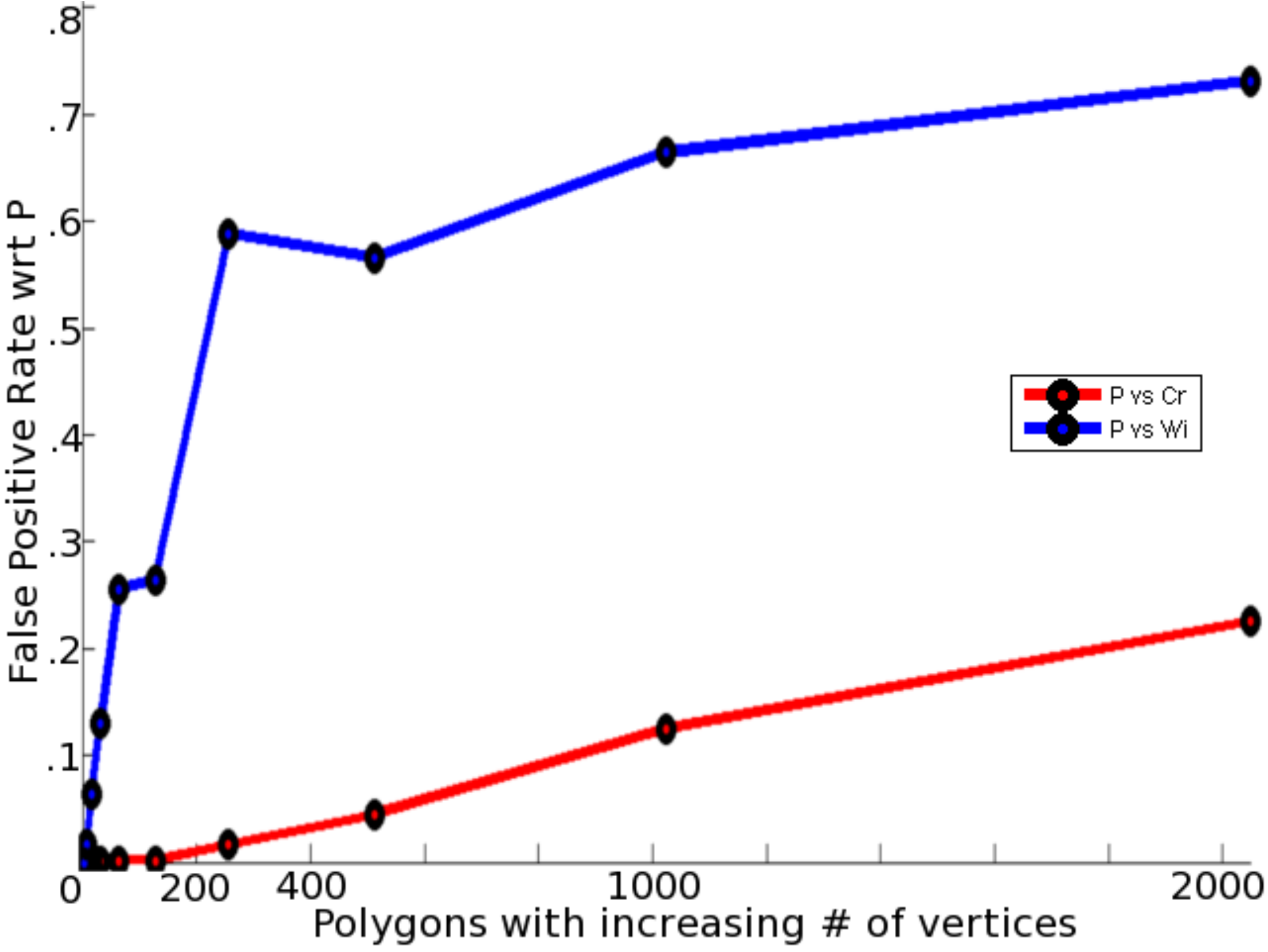}
\caption{False positive rates of Cr and Wi with respect to P.}
\label{fig:FPRwrtp}
\end{figure}
\begin{equation}
\chi^{2} = \frac{(n_{01} - n_{10})^{2}}{n_{01} + n_{10}}
\label{equ:mcnemar}
\end{equation}
\begin{equation}
\chi^{2} = \frac{(|n_{01} - n_{10}| - 1)^{2}}{n_{01} + n_{10}}
\label{equ:mcnemar_corrected}
\end{equation}
where, $n_{01}$ and $n_{10}$ are the false positives and the false
negatives, respectively. Equation \ref{equ:mcnemar_corrected} was used
as it contains the correction for discontinuity. In this
study a $\chi^{2}$ value of $3.84$ and above was set as a standard to
account for the significant difference of one algorithm against
another. Thus with $p \leq 0.05$, it is highly unlikely, that
the algorithm may be significant from the one that it is being
compared with. \par
In the present scenario, by the proof of the above theorems, it
is known that the proposed algorithm give exact results. \textbf{Thus
it is expected that the results obtained from the implementation of
the same theoretical idea shall approach ground truth with a margin of
error that is completely due to the precision format of the
computer}. Let (P) be the proposed algorithm, (Cr) the Crossing over
algorithm and (Wi) the Winding number rule algorithm. A matlab version
of the crossing over algorithm was adopted from
\cite{Darren:2007}. Matlab's \cite{Matlab} inpolygon algorithm was
taken as an implementation of the winding number rule algorithm. \par
A general analysis is presented relating to the overall behaviour of
the algorithms apropos to McNemar's Test, precision, recall, false
positive rate (Fpr) and error rate, as the number of vertices
increases. All artificially generated polygons have been stored and
can be reused with a new set of random test points to generate
similar results for checking. \par
\begin{figure}[!t]
\centering
\includegraphics[width=8.5cm,height=5cm]{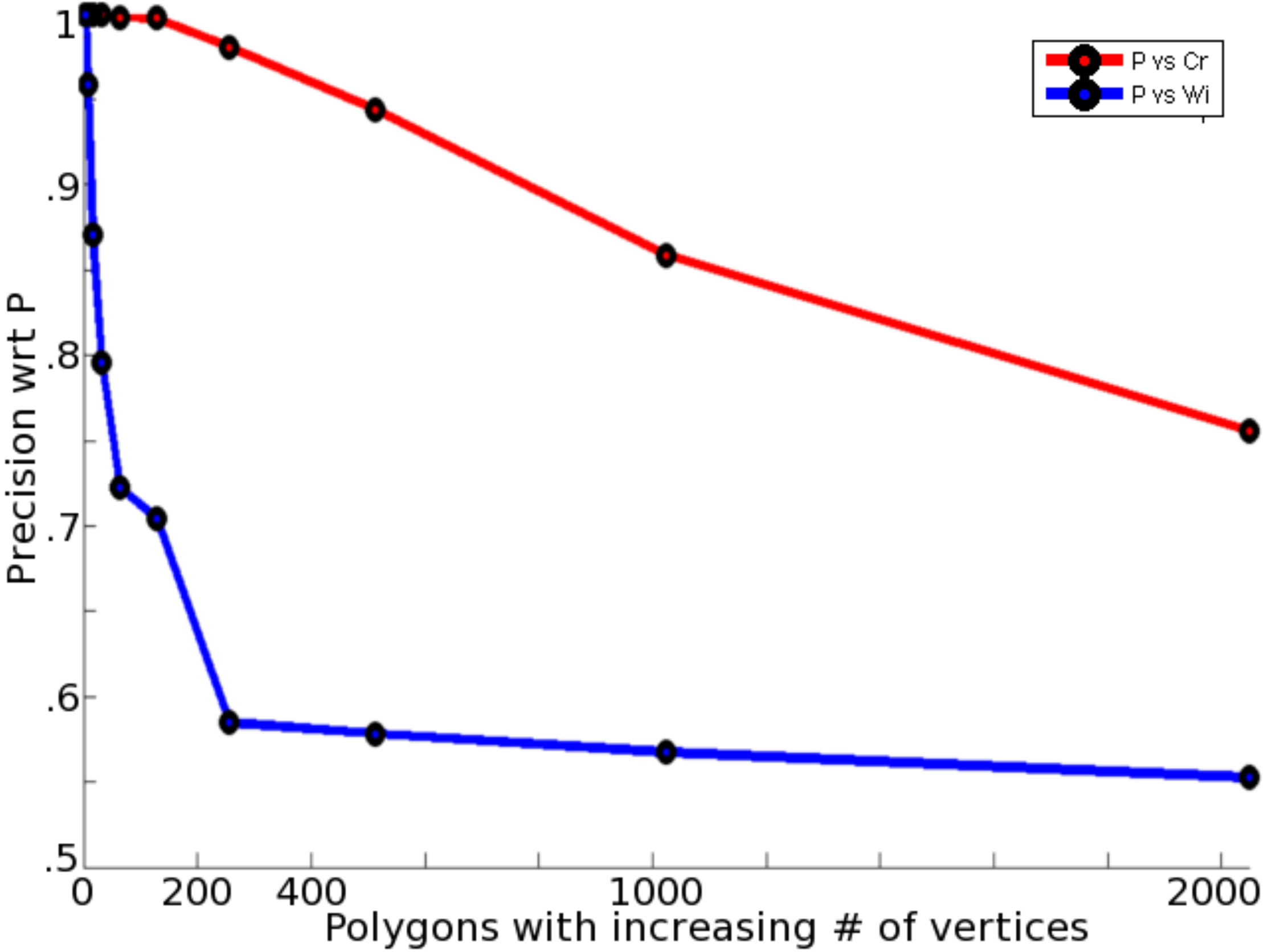}
\caption{Precision of Cr and Wi with respect to P.}
\label{fig:Precisionwrtp}
\end{figure}
\begin{figure}[!t]
\centering
\includegraphics[width=8.5cm,height=5cm]{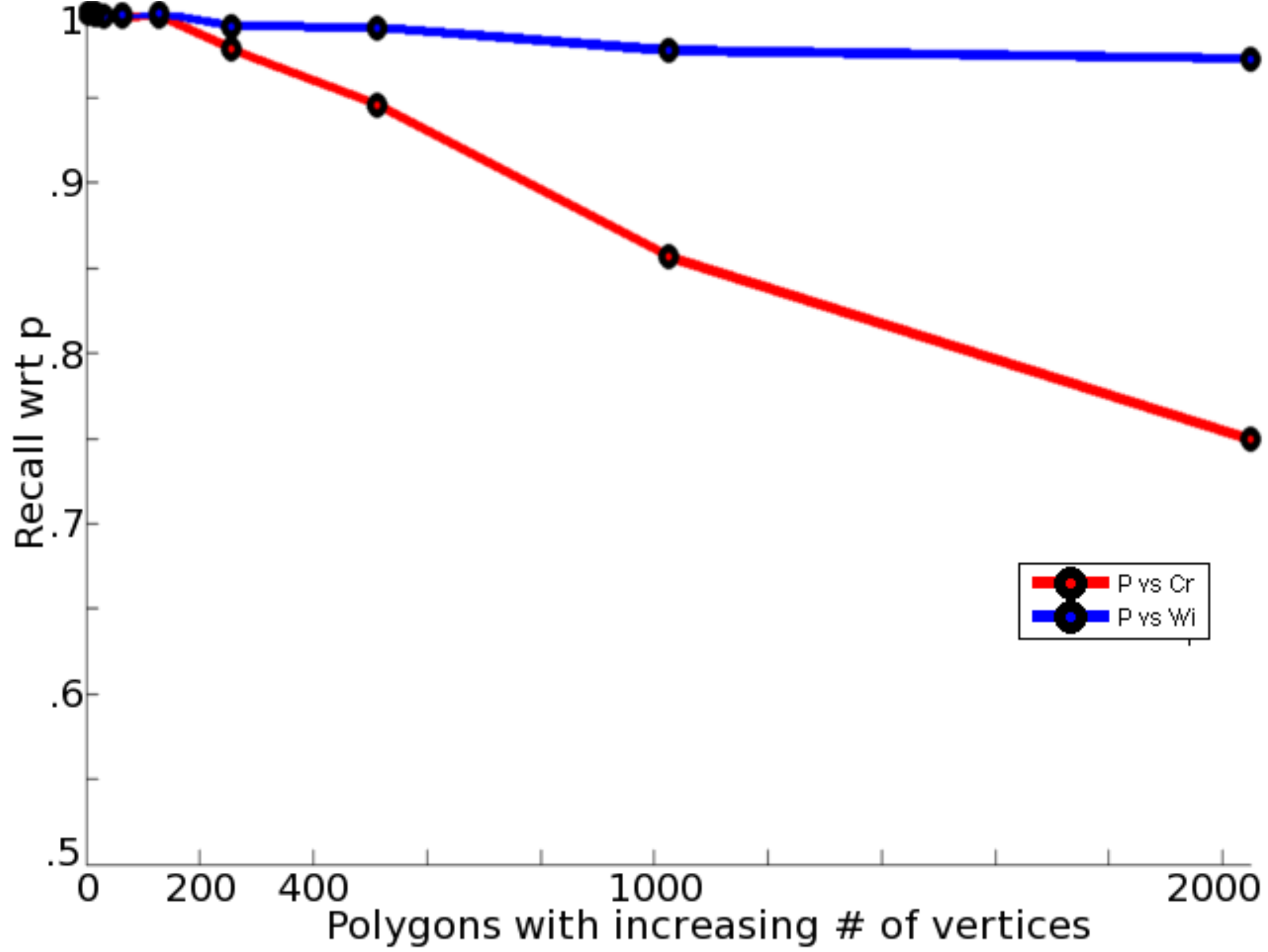}
\caption{Recall of Cr and Wi wrt P.}
\label{fig:Recallwrtp}
\end{figure}
%

\section{Discussion} \label{sec:discussion}
In order to evaluate the degree of fairness of the crossing over and
the winding number rule based algorithms w.r.t the proposed algorithm,
McNemar's test was performed on polygons with increasing number of
vertices with as sample size of $10000$. Final results in figure
\ref{fig:McNemartest} show the statistical significance in the
performance of the algorithms against each other. Clearly, it can be
seen that the results generated by Cr is significantly similar to
those generated via P (red graph in the figure). While comparing the
results of Wi with those of Cr as well as P, it can be seen that the
results are significantly different as the number of vertices
increases (blue and green graphs in the figure). \par
This apparent statistical difference is due to the fact that if the
winding number is $\ell \neq 0$, then the point lies $\ell$ times
inside the polygon. This is not true with the Cr which states that
even cross overs imply the that points are outside the polygons. P, on
the other hand sets a bound on the winding number rule, stating that
the if $\ell \neq 0$, then the point lies inside the polygon if and
only if the criterion of having a pair of chains (one above and other
below the point) is true (via proved theorems above). Thus even if
the winding number evaluates to say $2$, in case of the star polygon
for the point in figure \ref{fig:step_1_out}, P will always give the
result that the point lies outside the polygon and not twice inside the
polygon. \par
Figure \ref{fig:Errwrtp} shows the accuracy of the results obtained
by the Cr vs P and Wi vs P. The behaviour of the error rate
of Wi with respect to P and Cr, again can be attributed to the explanation
in the foregoing paragraph. Even though the results of the Cr have
been shown not to be statistically significant to that of P, with
increasing number of vertices (on a sample size of $10000$), the error
rate creeps up. The accuracy by itself does not always suffice to give
the measure of correctness and thus is aided via means of the false
positive rate, precision and recall. The graphs for the same have been
depicted in figures \ref{fig:FPRwrtp}, \ref{fig:Precisionwrtp} and
\ref{fig:Recallwrtp}. \par
The false positive rate tells how much one algorithm generated falsely
true results given that base algorithm had labeled the results as
false. Thus comparing results of Cr against P and Wi against P, the
false positive rate was generated over increasing number of vertices
and calculated on a sample size of $10000$ polygons. The graphs
suggest that both the Cr and Wi start contributing to the false
positives with respect to P as the number of vertices increases. Wi
also gives greater false positives with respect to Cr. Given the
results generated by the precision, recall and false positive error
rates it can be stated that with polygons constituting large number of
vertices, even Cr deviates from P.  
\begin{figure}[!t]
\centering
\includegraphics[width=8.5cm,height=5cm]{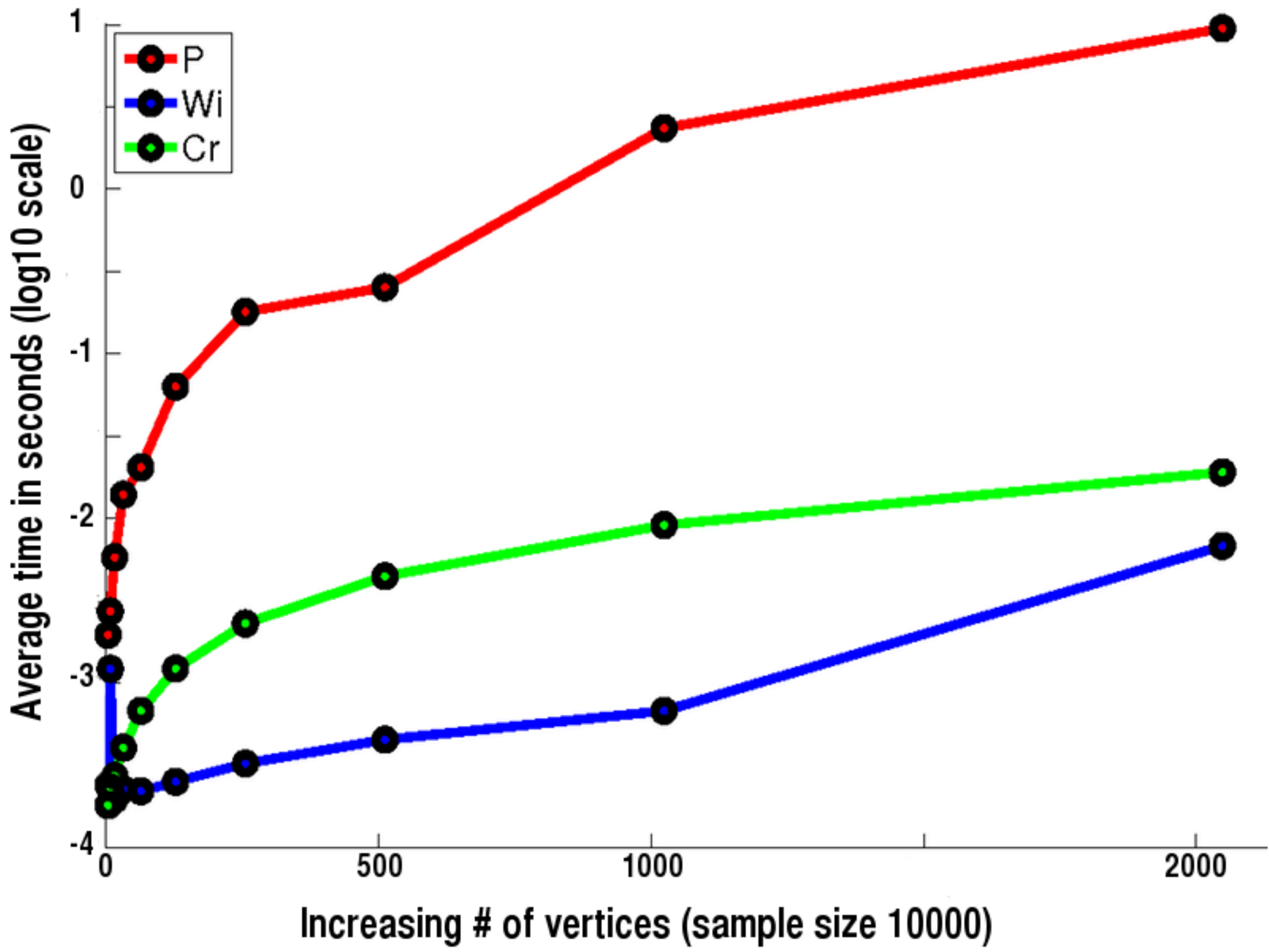}
\caption{Average time in seconds on a $log_{10}$ scale for polygons
  with increasing number of vertices. Sample size of $10000$ for each
  number of vertices.}
\label{fig:anal_01_06}
\end{figure}
The Cr which works on the fundamental idea of drawing a semi
infinite line from the sample point thus crossing the edges/vertices of
polygon can give different results depending on where the line passes
through. A different line drawn through the same sample
point may lead to a contrasting result as has been discussed in the
previous section \ref{sec:covseh}. Another case that happens in
Cr is that if the line passes through say one edge and a vertex (or
two vertices), then the results based on even odd rule, change. It has
often been pointed out that such cases are very rare and can be dealt
by infinitesimally shifting the sample point and the line
slightly. First, the the idea of being a rare case does not imply that
the problem is solved. Secondly, the solution of shifting the sample
point does work, but again it is argued that would it give
theoretically correct solution, if the sample point was on another
vertex and a slight shift would put the point outside the polygon itself. \par
With the proof of the proposed solution in the manuscript, it can be
shown that even if the polygon is intersected by two perpendicular
lines passing through the sample point at angles $\theta$ and $\theta
\pm 90$, the lines after rotation to horizontal-vertical frame, will
unarguably give correct results. Thus in rare cases also, the proposed
solution will work and give theoretical ground truths. Coming back to
the false positive rates, with a large sample, increasing size and
complexity of the polygons, the Cr and Wi are bound to give false
positives with respect to P. The precision and the recall
graphs in figures \ref{fig:Precisionwrtp} and \ref{fig:Recallwrtp},
suggest the concentration and the retrieval of the results. \par
Lastly, the average time in seconds on a log scale has been plotted in
figure \ref{fig:anal_01_06} to depict the time consumption by the
algorithms based on the concepts of the proposed solution, the
crossing over and the winding number rule. The proposed solution
apparently takes more time than the crossing over which in turn
consumes more time than the winding number. Also, in figure
\ref{fig:anal_01_07}, for polygons with the number of the vertices
$2048$, and an increasing sample size, the cumulative time in seconds
is plotted on the log scale. These graphs suggest that the reliability
of the proposed solution comes at a computational cost. But apart from
that, the correctness of the proposed solution is not affected, which
forms the kernel of the manuscript. \par
\begin{figure}[!t]
\centering
\includegraphics[width=8.5cm,height=5cm]{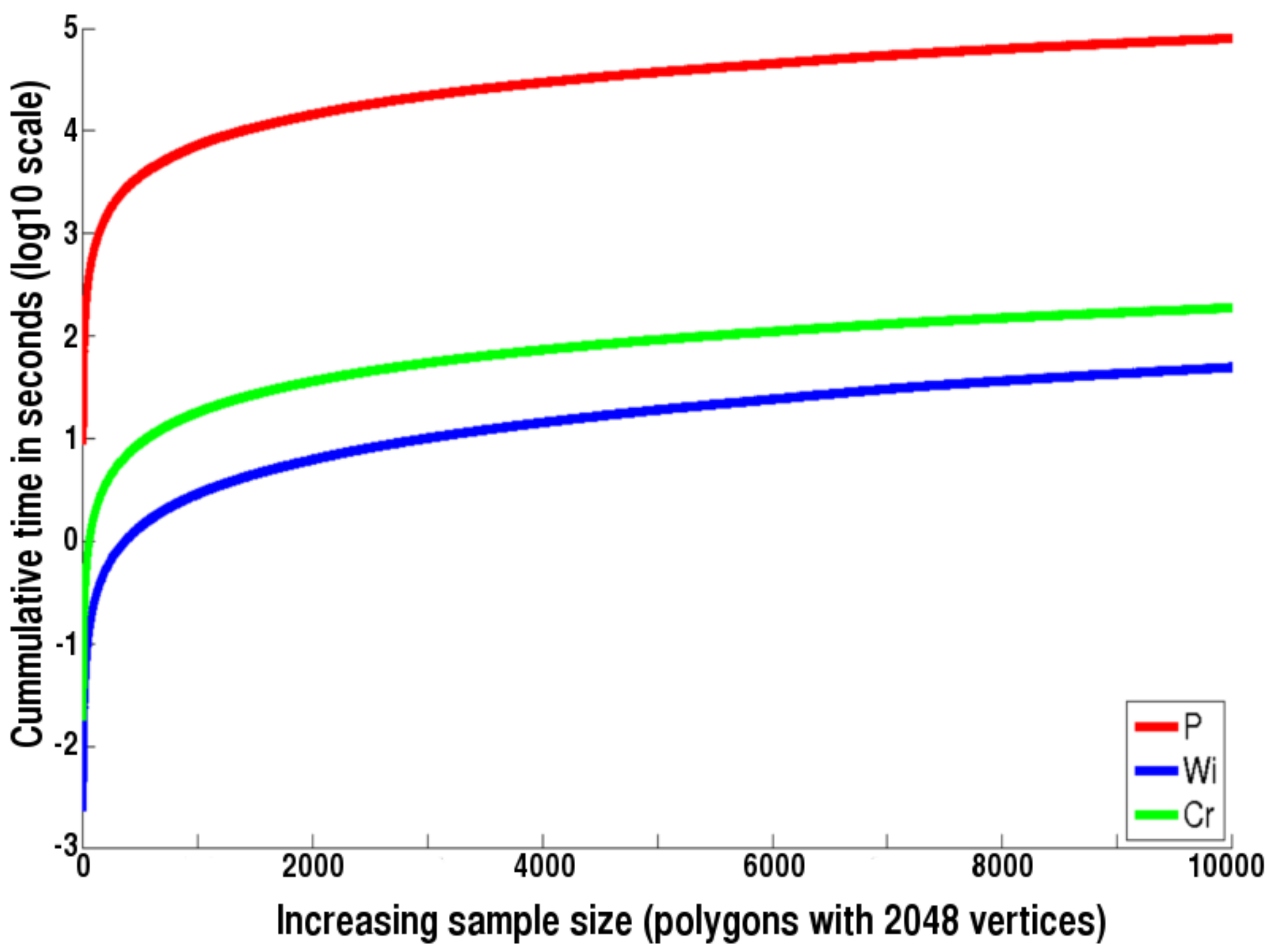}
\caption{Cumulative time in seconds on a $log_{10}$ scale for
  increasing sample size for polygons with number of vertices $2048$.}
\label{fig:anal_01_07}
\end{figure}

\bibliographystyle{ieeetr}

\end{document}